\title{Live Exploration with Mobile Robots in a Dynamic Ring, Revisited\thanks{A. R. Molla was supported, in part, by DST Inspire Faculty research grant DST/INSPIRE/04/2015/002801, Govt. of India. The work of William K. Moses Jr. was supported in part by a Technion fellowship.}\\
}
\author{\IEEEauthorblockN{Subhrangsu Mandal}
\IEEEauthorblockA{
\textit{Indian Institute of Technology Kharagpur}\\
Kharagpur, India \\
subhrangsum@cse.iitkgp.ac.in}
\and
\IEEEauthorblockN{Anisur Rahaman Molla}
\IEEEauthorblockA{
\textit{Indian Statistical Institute}\\
Kolkata, India \\
molla@isical.ac.in}
\and
\IEEEauthorblockN{William K. Moses Jr.}
\IEEEauthorblockA{
\textit{Technion - Israel Institute of Technology}\\
Haifa, Israel \\
wkmjr3@gmail.com}
}
\newtheorem{theorem}{Theorem}[section]
\newtheorem{corollary}{Corollary}[theorem]
\newtheorem{lemma}[theorem]{Lemma}
\newtheorem{observation}{Observation}
\newtheorem{remark}{Remark}
\newcommand{\R}{\mathcal{R}}
\newcommand{\subh}[1]{{\color{green}\bf [Subh: #1]}}
\begin{document}

\maketitle


\begin{abstract}
The graph exploration problem requires a group of mobile robots, initially placed arbitrarily on the nodes of a graph, to work collaboratively to explore the graph such that each node is eventually visited by at least one robot. One important requirement of  exploration is the {\em termination} condition, i.e., the robots must know that exploration is completed. The problem of live exploration of a dynamic ring using mobile robots was recently introduced in [Di Luna et al., ICDCS 2016]. In it, they proposed multiple algorithms to solve exploration in fully synchronous and semi-synchronous settings with various guarantees when $2$ robots were involved. They also provided guarantees that with certain assumptions, exploration of the ring using two robots was impossible. An important question left open was how the presence of $3$ robots would affect the results. In this paper, we try to settle this question in a fully synchronous setting and also show how to extend our results to a semi-synchronous setting. 

In particular, we present algorithms for exploration with explicit termination using $3$ robots in conjunction with either (i) unique IDs of the robots and edge crossing detection capability (i.e., two robots moving in opposite directions through an edge in the same round can detect each other), or (ii) access to randomness. The time complexity of our deterministic algorithm is asymptotically optimal. We also provide complementary impossibility results showing that there does not exist any explicit termination algorithm for $2$ robots even when each robot has a unique ID, edge crossing detection capability, and access to randomness. The theoretical analysis and comprehensive simulations of our algorithm show the effectiveness and efficiency of the algorithm in dynamic rings. We also present an algorithm to achieve exploration with partial termination using $3$ robots with unique IDs in the semi-synchronous setting, when robots have access to edge crossing detection capability and randomness but do not know a bound on the size of the ring or have access to a landmark or are guaranteed that robots have common chirality. Our algorithms are fully decentralized, lightweight, and easily implementable. 
\end{abstract}

\pagestyle{plain}

\begin{IEEEkeywords}
multi-agent systems, mobile robots, exploration, uniform deployment, distributed algorithms, dynamic graph, ring graph
\end{IEEEkeywords}


\section{Introduction}
\label{sec:intro}
The research area of autonomous mobile robots in a graph setting has been well studied over the years. Many fundamental problems have been studied in this area, such as the problem of exploration of a graph using multiple robots. In this problem, multiple robots are placed in nodes in the graph and the goal is to design an algorithm, run by each robot, such that all robots collectively visit each node at least once as quickly as possible. As this fundamental problem has been solved to a large degree in most vanilla settings \cite{AH00, DP99, DP14, FIPPP05, PP99}, its study has been extended to more exotic, but realistic, settings.

One such setting is a dynamic network. In the real world, dynamism is seen fairly regularly in networks. As with most things in real life, the dynamism that appears in the real world is quite complex. In order to work towards a deeper understanding of this complexity, we first start with a simpler model of dynamism, which is a restricted version of \textit{1-interval connectivity} applied to a ring in the synchronous setting. We now describe this version of dynamism. Consider $n$ vertices, $v_0, v_1, \ldots, v_{n-1}$, with an undirected edge between every node $v_i$ and $v_{i+1 \mod n}$, $\forall i: \, 0 \leq i \leq n-1$. In each round, the adversary can choose to remove at most one edge of the ring. 

In this setting, Di Luna et al.~\cite{DDFS16} were the first to study the problem of graph exploration when robots do not know what the adversary will do next (\textit{live} or \textit{online} dynamism). This is contrasted with another scenario, called \textit{post-mortem} dynamism, where robots have complete knowledge of how the adversary will control dynamism in every round. In~\cite{DDFS16}, Di Luna et al. studied both fully synchronous systems and semi-synchronous systems where nodes are anonymous, i.e. do not have unique IDs. In the fully synchronous setting, they show that by using just $2$ robots without unique IDs, subject to some assumptions, deterministic exploration of a ring in the presence of 1-interval connectivity is possible with termination detection. These assumptions include a mix of the following ideas: (i) robots have knowledge of the value of $n$, (ii) there exists a \textit{landmark} (a unique node that can be identified by robots as being unique), (iii) robots have common chirality (a common sense of clockwise/counterclockwise). They differentiate between \textit{explicit termination} detection where all robots can detect the completion of exploration and subsequently terminate, and \textit{partial termination} detection where at least one of the robots (but not necessarily all of them) detects completion and terminates. They show that deterministic ring exploration with explicit termination is possible with $2$ robots with the aforementioned assumptions. They also provide matching impossibility results that deterministic exploration with partial termination is impossible with $2$ robots when $n$ is unknown and no landmark is available, even in the presence of robots with unique IDs and common chirality. They also show that if $n$ is unknown, no landmark is available and the robots are anonymous, then regardless of the number of robots initially deployed on the ring, deterministic exploration with partial termination is impossible. This impossibility holds even if those robots have common chirality. An important question left unanswered was if exploration with $\geq 3$ robots is possible when no knowledge of $n$ is known and no landmark is available but robots may have IDs. In this paper, we further extended the exploration problem in the dynamic ring and try to settle this question in the fully synchronous setting and provide partial results in the semi-synchronous setting.

\subsection{Our Contributions}
In this paper, we look into exploration of a dynamic ring with $3$ robots and show various positive results when certain assumptions are made.

We show that deterministic exploration of a dynamic ring of size $n$ with explicit termination detection is indeed possible with $3$ robots when $n$ is unknown and no landmark is present. In fact, not only is exploration possible, but the running time of our algorithm (which is linear on the size of the ring) is asymptotically optimal. We require robots to have unique IDs and have the capability of edge crossing detection, i.e. two robots passing through the same edge in a given round in opposite directions can detect that they passed each other in that round. We also implement our algorithm and show that it outperforms the theoretical time bound for different parameter ranges.  

We subsequently remove the need for the edge crossing detection assumption with the help of randomness. We also show how to use randomness to remove the need for robots to have unique IDs. Note that this result when we achieve explicit termination with anonymous $3$ robots, no landmark, no knowledge of $n$, but access to randomness is in sharp contrast to the impossibility result of~\cite{DDFS16} where even partial termination with any number of robots is impossible without under the same setting but without access to randomness. We also show how to modify our algorithm to achieve partial termination with better runtime. Our positive results are summarized in Table~\ref{table:possibility-results}.

One may wonder if either the use of edge crossing detection or the use of randomness is sufficient for $2$ robots to bypass the impossibility result from~\cite{DDFS16}. We show that when robots only have access to edge crossing detection, exploration with partial termination of two robots is impossible. We further show that when the use of randomness is also allowed, exploration with explicit termination of two robots is impossible. Thus, we see that only with the use of $3$ robots do either of these capabilities provide sufficient power to overcome the impossibility of exploration with explicit termination. Our impossibility results are summarized in Table~\ref{table:impossibility-results} along with a comparison to the impossibility result from~\cite{DDFS16}.

Finally, we show how to use the ideas we built up throughout the paper in order achieve partial termination in the semi-synchronous setting when robots do not know an upper bound on the value of $n$ or a have access to a landmark. The algorithm uses $3$ robots with unique IDs, access to randomness, and access to the edge crossing detection capability.

\begin{table*}[ht]
	\caption{Fully synchronous setting, impossibility results.} 
	\centering 
		\resizebox{2.0\columnwidth}{!}{%
	\begin{tabular}{|c|c|c|c|c|}
		\hline
		Paper & \# of Robots & Assumptions & Even with Assumptions & Result \\
		\hline
		\hline
		\cite{DDFS16} & 2 & No knowledge of $n$, No landmark & Non-anonymous robots, Chirality & Partial termination impossible \\
		\hline
		\multirow{2}{*}{\cite{DDFS16}} & \multirow{2}{*}{Any} & No knowledge of $n$, No landmark, & \multirow{2}{*}{Chirality} & \multirow{2}{*}{Partial termination impossible} \\
		 &  & Anonymous robots &  &  \\
		\hline
		\multirow{2}{*}{Current paper} & \multirow{2}{*}{2} & \multirow{2}{*}{No knowledge of $n$, No landmark} & Non-anonymous robots, Chirality,  & \multirow{2}{*}{Partial termination impossible}\\
		& & & Edge crossing detection & \\
        \hline
		\multirow{2}{*}{Current paper} & \multirow{2}{*}{2} & \multirow{2}{*}{No knowledge of $n$, No landmark} & Non-anonymous robots, Chirality,  & \multirow{2}{*}{Explicit termination impossible}\\
		& & & Edge crossing detection, Access to randomness & \\
		\hline
	\end{tabular}
		}
	\label{table:impossibility-results}
\end{table*}

\begin{table*}[ht]
	\caption{Fully synchronous setting, possibility with 3 robots. Results: exploration with explicit termination.} 
	\centering 
		\resizebox{2.0\columnwidth}{!}{%
	\begin{tabular}{|c|c|}
		\hline
		Assumptions & Running time \\
		\hline
		\hline
		Non-anonymous robots, Edge crossing detection & Explicit termination in $O(n)$ rounds\\
        \hline
		Non-anonymous robots, Access to randomness & Explicit termination with probability $\geq 1 - 1/n$ in $O(n \log n)$ rounds on expectation \\
		\hline
		Access to randomness & Explicit termination with probability $\geq (1 - O(1/2^l))(1 - 1/n)$ in $O((n + n\cdot 2^l)\log n)$ rounds on expectation$^*$ \\
		\hline
		\multicolumn{2}{|l|}{$^*$ Here, $l$ is an input parameter to the algorithm.}\\ 
		\hline
	\end{tabular}
		}
	\label{table:possibility-results}
\end{table*}

\subsection{Related Work}
Exploration of static anonymous graphs using mobile robots has been studied for a very long time. A good survey on the topic is presented in~\cite{D19,Luna19}. Exploration on anonymous graphs with 1-interval connected dynamism is relatively new and the first paper to study it in the current model is~\cite{DDFS16}. In the paper, they look at exploration problem in a ring under 1-interval connected dynamism and provide various deterministic algorithms to solve the problem using $2$ robots for various assumptions. It should be noted that the way 1-interval connectivity is defined in their paper and also in the current paper is different from the original definition proposed in~\cite{KLO10, OW05}. Specifically, the original definition of 1-interval connectivity allows for permutations of the nodes of the graph, whereas in~\cite{DDFS16}, the nodes remain stationary and the adversary can only choose whether to remove at most one of a fixed set of edges.

A randomized approach to graph exploration was presented in~\cite{AKL08} via random walk, however the model of dynamism they look at is slightly different. 
Their approach is that of a lazy random walk, but when the rate of change of the graph is very fast, i.e. every round the adversary changes the graph, then things become complicated. Essentially their approach may take $\Omega(n^2)$ time to explore a dynamic ring of size $n$, however, it cannot guarantee any termination.  

There are other works in literature which have addressed the problem of exploration on dynamic graphs. Exploration problem on dynamic ring for T-interval connected case is addressed in~\cite{IW13}. They have addressed the problem in two scenarios. In one scenario the robot knows about all changes in the dynamic ring. In another case the robot has no knowledge about the changes but the edges are $\delta$-recurrent. They have extended their work in~\cite{IKW14} and addressed the exploration problem on cactus graph when change in the graph topology is known to the robot. There are other works like~\cite{EHK15, MS16} which address the exploration problem for general graphs in centralized environment when the change in the graph topology is already known. There are works~\cite{FMS13, IW11} which address the live or online version of the exploration problem in distributed environment for periodically varying graphs. In this case there are finite number of carriers in the graph and an edge between two nodes exist only when a carrier moves from one node to the other. They have assumed that the movement of each carrier is periodic. A very recent work \cite{GFMS19}, studies exploration in time-varying graphs (including $1$-interval connectivity) of arbitrary topology, investigates the number of robots necessary and sufficient to explore such graphs. 
There have been other papers that look at different problems such as gathering~\cite{DFPPSV18} and dispersion~\cite{AAMSS18} on dynamic graphs under 1-interval connectivity.



\subsection{Organization of Paper}
In Section~\ref{model}, we elaborate on the exact model of the system. In Section~\ref{sec:impossibility}, we present our impossibility results for termination with just 2 robots. In Section~\ref{explore}, we develop our algorithm to achieve exploration with explicit termination using 3 robots, including simulations of the algorithm. In Section~\ref{sec:randomness-exploration}, we show how to remove the requirement of edge crossing detection and unique IDs for robots in our algorithm through the creative application of access to randomness. In Section \ref{sec:ssync}, we extend our deterministic algorithm to achieve exploration with partial termination using 3 robots in semi-synchronous model. 
Finally, we conclude with future research directions in Section~\ref{sec:conclusions}.

\section{Network Model and Assumptions}
\label{model}
We consider a $1$-interval connected synchronous dynamic ring $\mathcal{R}$ of size $n$ as considered in \cite{Luna19,DDFS16}. As $\R$ is a ring, each node in $\R$ has two neighbours connected via two ports. The ring is anonymous, i.e., nodes are indistinguishable. We assume that the nodes are fixed, but the edges of $\mathcal{R}$ may change over time. More precisely, at any round one of the edges might be missing from $\R$. An adversary decides which edge to be deleted in a round. This dynamic ring is called as a $1$-interval connected ring \cite{KO11,DDFS16}. The adversary controls the edge deletion (and addition) with the knowledge of the algorithm and current states and positions of the robots. 

There are three robots $A = \{R_1, R_2, R_3\}$ which explore $\mathcal{R}$. Each robot is equipped with a finite memory, say $O(\log n)$ bits and computational capabilities. Each robot has a unique identifier (ID) and initially a robot only knows its own ID. Furthermore, we assume that the IDs are $k$-bit strings such that the length $k$ is $O(1)$. It is sufficient to represent $3$ distinct IDs with constant number of bits. ID of a robot is represented as $b_{k-1} b_{k-2} \cdots b_1 b_0$. We assume that the length of each ID is same. Initially robots do not know the size of the ring (not even any bound of it). The robots do not share any common chirality, i.e., the {\em clockwise} or {\em anti-clockwise} direction for all the robots may differ. During movement, at any node a robot can differentiate between the port through which it enters the node and the other port. All the robots execute the same protocol. Multiple robots can reside at a single node at the same time. The robots can move from one node to a neighboring node in some round if the corresponding edge is available in that round. A robot can {\em successfully move} towards a {\em fixed} direction if the corresponding adjacent edge is available in the dynamic ring; otherwise, if the edge is missing, the robot waits until the edge is available. We assume the {\em edge crossing detection}, i.e. two robots moving in opposite directions on the same edge in the same round, can detect that they passed each other in that round and exchange information.

We consider here a synchronous system which progresses in time steps, called as rounds. In a single round, the sequence of operations executed as follows: (i) the robots perform local computation and decide whether to move from the current node and the direction of the movement, (ii) the adversary removes at most one edge from the ring for this round, (iii) the robots execute their movements, if any, so long as the edge they wish to move over is present.

Note that we also consider a semi-synchronous system in Section~\ref{sec:ssync}. As most of the paper relates to a fully synchronous system as previously described, we postpone the description of the semi-synchronous system to Section~\ref{sec:ssync}.

\section{Impossibility of Exploration with 2 Robots}
\label{sec:impossibility}

In this section, we extend the impossibility results from~\cite{DDFS16} to the scenario where robots also have the edge crossing detection capability and access to randomness. First, we make a similar observation to Observation~2 from \cite{DDFS16}.

\begin{observation}\label{obs:never-meet}
The adversary can prevent two robots starting at different locations from meeting each other even if they have unlimited memory, common chirality, distinct known IDs, the edge crossing detection capability, and access to randomness.
\end{observation}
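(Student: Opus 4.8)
The plan is to exhibit an explicit adversarial edge-deletion strategy and prove, by induction on the round number, that it maintains the invariant that the two robots never occupy the same node (and in fact never traverse a common edge in the same round). I would measure the configuration by $\delta$, the length in edges of the shorter of the two arcs into which the robots split the ring; the robots are co-located precisely when $\delta = 0$, and initially $\delta \ge 1$ since they start at different locations. The crucial structural fact is that, because each robot is a single point, any movement that shortens one arc lengthens the other, so in a given round at most one arc can shrink, and it shrinks by at most $2$ (the two robots each crossing one edge ``inward''). Hence the only configurations from which $\delta$ can reach $0$ in a single round are $\delta = 1$ and $\delta = 2$.

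The adversary strategy I would use is: (i) if the robots currently sit on the two endpoints $u,w$ of a common edge $e$ (i.e. $\delta = 1$), delete $e$; (ii) if $\delta = 2$, with the robots at $u,w$ and common neighbour $m$, and both robots have decided to move onto $m$, delete one of the edges $um$, $mw$; (iii) otherwise delete an arbitrary edge or none. In case (i), deleting $e$ simultaneously blocks either robot from stepping onto the other's node and blocks the two from swapping across $e$, so no co-location and no edge crossing occurs that round, while any other moves (along the opposite arc) only increase $\delta$. In case (ii), the deleted edge blocks one of the two converging robots, so at most one robot enters $m$ and $\delta$ becomes $1$ rather than $0$. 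For $\delta \ge 3$ no single-round move can reach $0$. Carrying out the case analysis on the robots' decided moves then shows the invariant is preserved, so the robots never meet.

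The point that requires care, and where the extension beyond \cite{DDFS16} really lives, is explaining why the listed capabilities do not help the robots. Unlimited memory, common chirality, and distinct known IDs only influence each robot's internal decision of whether and in which direction to move; the adversary's strategy above is a pure function of the robots' positions and their decided moves, both of which it observes, so these capabilities are irrelevant to it. Edge crossing detection could only ever fire when the two robots traverse a common edge in opposite directions, but this is exactly the swap event blocked in case (i) by deleting $e$; since the adversary prevents any shared-edge traversal, the detection capability never activates and hence never yields a shared node.

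For randomness I would appeal to the round structure of the model: the robots first perform their (possibly randomized) local computation and fix a movement decision in step (i), and only then, in step (ii), does the adversary---who knows the algorithm and sees the current states and positions, and hence the realized decisions---delete an edge. Thus the coin flips are already resolved into concrete move directions before the adversary acts, and the reactive strategy above responds correctly to whatever directions were realized. Consequently randomness confers no advantage, and the invariant $\delta \ge 1$ is maintained with certainty for every execution. The main obstacle is not any single step but being careful, in the case analysis, to confirm that the one-edge-per-round deletion budget always suffices to neutralize the at most two inward moves that threaten a meeting, together with the observation that it is precisely the adversary's post-decision position in the round order that defeats the randomness.
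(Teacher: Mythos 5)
Your proposal is correct in substance and follows essentially the same route as the paper: the paper's entire proof is the one-sentence reactive strategy ``if the two robots are on adjacent nodes, delete the edge between them,'' together with the implicit appeal to the round ordering (decide, then delete, then move) to dispose of randomness. What you add is genuinely worth having: your case (ii) covers the configuration $\delta = 2$ with both robots converging on the common middle node $m$, which the paper's literal strategy does not block at all (deleting nothing there lets both robots land on $m$), and your explicit statement that the adversary acts \emph{after} the coin flips are resolved into concrete move decisions is exactly the right reason randomness is useless here. One corner case in your write-up deserves a patch: on a ring of size $3$, two adjacent robots ($\delta = 1$) also share a common neighbour $x$ on the long arc, and if both decide to move to $x$ your rule (i) deletes the connecting edge $e = uw$, which blocks neither of them, so they meet at $x$. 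The fix is to make the strategy move-driven rather than position-driven: if the decided moves converge on a common node, delete one of the two incident edges being used; otherwise, if one robot would step onto or swap across the edge to the other, delete that edge. Since at most one node per round can be the target of a convergence and at most one connecting edge can be threatened, the one-deletion budget always suffices, and this also subsumes your cases (i) and (ii) uniformly for all $n \ge 3$.
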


This observation is clear to see by considering the following strategy of an adversary. The adversary does nothing unless the following condition arises. If the two robots are on adjacent nodes, then the adversary removes the edge between those two nodes. Now, using this observation, we are able to prove the following impossibility result, which is a more general version of that seen in~\cite{DDFS16}. We note the nature of the proof is similar to that of Theorem~1 in~\cite{DDFS16}. 

\begin{theorem}
There does not exist any exploration algorithm with partial termination of anonymous rings of unknown size by two robots, even when robots have distinct IDs, common chirality, the edge crossing detection capability, and when the scheduler is fully synchronous.
\end{theorem}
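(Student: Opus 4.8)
The plan is to argue by contradiction using an indistinguishability (ring-extension) argument built on top of Observation~\ref{obs:never-meet}. Suppose some deterministic algorithm $\mathcal{A}$ solves exploration with partial termination under the stated assumptions. First I would fix an arbitrary ring and run $\mathcal{A}$ against the adversary of Observation~\ref{obs:never-meet}, which keeps the two robots from ever meeting or crossing an edge together, so that neither robot ever receives any information about the other. Because partial termination requires that some robot eventually detect completion and halt, in this execution at least one robot must terminate at some finite round $T$; call it $r$. The key structural point is that, having made at most $T$ moves, $r$ has visited at most $T+1$ nodes, and since $r$ never met or crossed the other robot $r'$, its entire view up to round $T$ is exactly that of a solitary walk on an anonymous line: it carries no information that could reveal the ring size or the location and progress of $r'$.

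Next I would construct a second ring $R'$ whose size $n'$ is large enough that $n' > 2(T+1)$ and that $r'$ can be placed at distance exceeding $2T$ from $r$ along both arcs, and then exhibit a valid $1$-interval-connected adversary on $R'$ under which $r$ has a view identical to its view on the original ring, forcing it (by determinism) to again terminate at round $T$. Concretely, I place $r$ at a corresponding start node and $r'$ on the far side of $R'$, so that within $T$ rounds the two robots' reachable regions, each of radius at most $T$ about its start, remain disjoint. The adversary then simply replays the edge-presence pattern that $r$ experienced originally: whenever $r$ was blocked, it deletes exactly the single edge $r$ attempts to cross that round, and otherwise deletes nothing relevant to $r$. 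Since $r'$ is far away, this single deletion never interferes with $r'$, the two robots never become adjacent or attempt to traverse a common edge, and so no edge crossing or meeting ever occurs, matching the original view exactly.

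The contradiction then follows: $r$ terminates at round $T$ on $R'$, thereby declaring exploration complete, but by round $T$ the two robots together have visited at most $2(T+1) < n'$ nodes, so at least one node of $R'$ remains unexplored. This violates the correctness of $\mathcal{A}$, and since all the assumed capabilities (distinct IDs, common chirality, edge crossing detection, full synchrony) were already granted in Observation~\ref{obs:never-meet}, none of them helps.

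I expect the main obstacle to be the second step: producing a single adversary on $R'$ that simultaneously reproduces $r$'s blocked-move pattern exactly, respects the at-most-one-edge-deletion-per-round constraint, and still prevents the two robots from meeting or crossing. The resolution is precisely the freedom afforded by the unknown, and now enlarged, ring size: placing $r'$ far from $r$ on the big ring decouples the two robots during the first $T$ rounds, so the lone permitted deletion can be devoted entirely to faithfully recreating $r$'s experience. A secondary subtlety worth checking is that $r$ genuinely cannot detect a wrap-around on the original anonymous ring, so that its view really is line-like and embeddable into $R'$; this holds because the non-meeting adversary prevents $r$ from ever passing $r'$ and anonymity prevents it from recognizing a revisited node.
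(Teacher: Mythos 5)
Your proposal is essentially the paper's own proof: both argue by contradiction, invoke Observation~\ref{obs:never-meet} to obtain a non-meeting execution in which some robot terminates at a finite round $T$, then blow the ring up to size roughly $4T+4$ with the robots placed $2T+2$ apart, replay the terminating robot's view so it halts again by determinism, and count that at most $2T+2$ nodes can have been visited. The only substantive difference is that the paper, after $R_1$ terminates, has the adversary additionally trap $R_2$ between two nodes forever so that exploration is \emph{never} completed; you should add this step, since under the weaker reading of partial termination (exploration must eventually be achieved and some robot must eventually halt, with no requirement that the halt coincide with completion) your argument as stated leaves open the possibility that the surviving robot finishes the exploration after round $T$.
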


\begin{proof}
Let there exist an algorithm $\mathcal{A}$ that achieves exploration with partial termination with two robots, say $R_1$ and $R_2$. Run this algorithm on a ring of size $n$ and consider any adversary strategy that prevents $R_1$ and $R_2$ from ever meeting. From Observation~\ref{obs:never-meet}, we know such a strategy is possible. Let us assume that, without loss of generality, $R_1$ terminates first after $T(n)$ rounds. We now construct a ring and adversary strategy such that $\mathcal{A}$ will fail, i.e. never achieve exploration.

Consider the ring of size $4T(n)+4$ and place $R_1$ and $R_2$ on nodes that are at distance $2T(n) + 2$ from each other, i.e. at opposite "ends" of the ring. Have each robot run $\mathcal{A}$ and let the adversary act such that $R_1$'s local view at each time step is similar to its view when $R_1$ ran $\mathcal{A}$ on the ring of size $n$. Thus, after $T(n)$ time steps (rounds), $R_1$ will terminate and by Observation~\ref{obs:never-meet}, $R_1$ never came into contact with $R_2$. Subsequently, in each future round the adversary will remove any edge that $R_2$ may want to traverse and thus ensure that $R_2$ does not explore any more nodes. After $T(n)$ rounds, $R_1$ and $R_2$ would have collectively explored at most $2T(n) + 2$ nodes and thus $\mathcal{A}$ fails to achieve exploration, which is a contradiction.
\end{proof}

We now provide a similar impossibility result when robots have access to randomness. Note that for this result, we are showing the impossibility of explicit termination and not partial termination. Also note that the proof is similar to that of the previous theorem with a few subtle but significant changes.

\begin{theorem}
There does not exist any exploration algorithm with explicit termination of anonymous rings of unknown size by two robots, even when robots have distinct IDs, common chirality, the edge crossing detection capability, access to randomness, and when the scheduler is fully synchronous.
\end{theorem}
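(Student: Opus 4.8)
The plan is to follow the skeleton of the previous proof: assume an algorithm $\mathcal{A}$ exists, invoke Observation~\ref{obs:never-meet} so that the two robots never meet, transplant the local view of one robot onto a much larger ring, and force that robot to announce completion while a constant fraction of the ring is still unexplored. The one structural change is that, with randomness, there is no fixed termination time $T(n)$ to anchor the construction on; the halting time is now a random variable, so I must replace $T(n)$ by a high-probability threshold and couple the robot's coin tosses across the two experiments.

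First I would assume, for contradiction, that $\mathcal{A}$ achieves exploration with explicit termination using two robots $R_1, R_2$ with success probability exceeding $1/2$. Run $\mathcal{A}$ on a ring of size $n$ against the adversary of Observation~\ref{obs:never-meet}. Since the robots never meet and never traverse a common edge in opposite directions, the edge-crossing signal always reads ``none'', so each robot's observation sequence --- entry port, whether its attempted move succeeded, and the (empty) crossing signal --- is a deterministic function of its own coins together with the adversary's deletions of its two incident edges. Explicit termination forces both robots to halt on every successful run, so $R_1$ halts with probability at least $\mathcal{A}$'s success probability; consequently there is a finite $t_0$ with $\Pr[R_1 \text{ halts within } t_0 \text{ rounds}]$ as close to that probability as we wish. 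By round $t_0$, $R_1$ has visited at most $t_0+1$ distinct nodes.

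Next I would fix such a $t_0$ and build a ring $\mathcal{R}'$ of size $N = 4(t_0+1)$ with $R_1$ and $R_2$ started at distance $N/2$. The adversary on $\mathcal{R}'$ deletes $R_1$'s forward edge in precisely the rounds in which it was deleted for $R_1$ on the size-$n$ ring, and $R_1$ is driven by the same coin sequence. Because the robots start $2(t_0+1)$ apart and each advances at most one node per round, they remain non-adjacent through round $t_0$, so (exactly as in the size-$n$ run) no meeting or crossing occurs; hence $R_1$'s observed view, and therefore its decisions and halting time, are identical in the two experiments. Thus $R_1$ halts within $t_0$ rounds on $\mathcal{R}'$ with the same near-maximal probability. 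On that event the two robots together have visited at most $2(t_0+1) < N$ nodes, so $R_1$ declares completion while at least $2(t_0+1)$ nodes are unexplored --- a violation of explicit termination. Letting the threshold probability approach $\mathcal{A}$'s claimed success probability shows that $\mathcal{A}$ fails on $\mathcal{R}'$ with probability essentially equal to that success probability, which is impossible once it exceeds $1/2$.

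The hard part is exactly this departure from the deterministic argument: because the halting round is random I cannot transplant a single fixed trajectory, so the proof rests on (a) extracting a threshold $t_0$ that captures almost all of the halting-time mass and (b) coupling $R_1$'s random bits between the two rings so that ``$R_1$ halts by $t_0$'' is literally the same event, with the same probability, in both. The legitimacy of this coupling is the crux, and it hinges on the observation that the never-meeting guarantee keeps the edge-crossing signal identically empty, so the adversary can reproduce $R_1$'s entire view on $\mathcal{R}'$ using only deletions of $R_1$'s own incident edges --- well within its one-deletion-per-round budget. For explicit termination the contradiction is then immediate, since under explicit termination a robot halts only upon detecting completion, and so $R_1$'s premature halt on $\mathcal{R}'$ is a genuine error.
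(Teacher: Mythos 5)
Your proposal is correct and follows the same overall route as the paper: invoke Observation~\ref{obs:never-meet}, transplant $R_1$'s local view onto a ring of size roughly $4T+4$ with the robots started $2T+2$ apart, and derive a contradiction from premature termination. There are two differences worth noting. First, where the paper simply defines $T(n)$ as the \emph{maximum} running time over all choices of randomness and all never-meet adversaries --- a quantity that need not be finite for a randomized algorithm --- you replace it with a quantile $t_0$ capturing all but $\epsilon$ of the halting-time mass of $R_1$, and then couple $R_1$'s coins across the two rings so that the event ``$R_1$ halts by $t_0$'' transfers with (essentially) the same probability. This is a genuine refinement: it is the more rigorous way to handle a random halting time, and it is exactly the point at which the paper's own argument is least careful. (To make the coupling fully formal you would want to fix, by averaging, a single realization of $R_2$'s coins in the size-$n$ simulation so that the replaying adversary on the large ring is a deterministic adaptive strategy; you gesture at this but do not spell it out --- the paper glosses over the same detail.) Second, you stop at the point where $R_1$ terminates while half the ring is unexplored and declare this a violation of explicit termination; the paper goes one step further and has the adversary subsequently trap $R_2$ in a two-node strip, so that the contradiction is the unconditional statement ``at most $2T(n)+2$ nodes are ever explored'' rather than ``$R_1$'s detection of completion was false.'' Your reading of explicit termination (a robot halts only upon correctly detecting completion) makes the trapping step unnecessary, and it matches the paper's stated definition; but if one adopts the weaker reading in which the only requirement is that all robots eventually halt and all nodes are eventually visited, you would need to append the paper's trapping phase for $R_2$, which costs nothing. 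Neither difference affects correctness.
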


\begin{proof}
Let there exist an algorithm $\mathcal{A}$ that achieves exploration with explicit termination with two robots, say $R_1$ and $R_2$. Run this algorithm on a ring of size $n$. Define $T(n)$ to be the maximum running time for both robots to terminate, over all choices of randomness and all adversarial strategies that prevented $R_1$ and $R_2$ from meeting. By Observation~\ref{obs:never-meet}, we know that such strategies exist. Define an execution $j$ of $\mathcal{A}$ as a vector of all the random choices, information communicated, local computation, and movements performed by both robots until termination. Define $V_{R_1}(i,j)$ as the vector of local views of $R_1$ for all rounds up to round $i$ for some execution $j$ of $\mathcal{A}$. Define $V_{R_1}(j)$ as the vector of local views of $R_1$ for all rounds up to termination for some execution $j$ of $\mathcal{A}$. Define $\mathcal{V_{R_1}}$ as the set of all $V_{R_1}(j)$ across all executions $j$ of $\mathcal{A}$ for all possible choices of randomness and all possible adversarial strategies subject to the condition that $R_1$ and $R_2$ never meet. We now construct a ring and adversary strategy such that $\mathcal{A}$ will fail, i.e. never achieve exploration.

Consider the ring of size $4T(n) + 4$ and place $R_1$ and $R_2$ on nodes that are at distance $2T(n) + 2$ from each other. Now, the adversary focuses on $R_1$ and acts so that the local view of $R_1$ at round $i$, $V_{R_1}(i,j)$ will always belong to $\mathcal{V_{R_1}}$ for the current execution $j$ of $\mathcal{A}$. This is possible because $R_1$ and $R_2$ never meet in any of the executions we considered and so the local view of $R_1$ in a given round is influenced only by which edge has been removed in that round. Furthermore, since $R_1$ and $R_2$ are located at distance $2T(n) + 2$ away from each other, $R_2$ will never meet $R_1$ within $T(n)$ rounds, so we can safely ignore how $R_2$ behaves. Now, after $T(n)$ rounds, $R_1$ will terminate. 

We subsequently have the adversary focus on $R_2$ and trap the robot within a strip of two nodes. Consider two adjacent nodes $u$ and $v$ and let $R_2$ be present on one of them. The adversary always removes the edge from the node that is not the edge between $u$ and $v$. Thus, $R_2$ will either terminate or indefinitely move between these two nodes.

In the course of the execution, $R_1$ could explore at most $T(n) + 1$ nodes before it terminates and $R_2$ could similarly explore at most those many nodes before being trapped. Thus, no more than $2T(n) + 2$ nodes could ever be explored, resulting in $\mathcal{A}$ failing. This is a contradiction and thus we see that no such $\mathcal{A}$ can exist.
\end{proof}

The reason the above impossibility result works for explicit termination but not for partial termination is that when we allow robots to use randomness to make choices, it no longer becomes clear which robot terminates first. This is not an issue for explicit termination because we leverage the fact that both robots eventually terminate and consider that running time. However, for partial termination, when utilizing the adversary to mimic the local view of one of the robots, it is unclear which robot we should focus on initially. And since we cannot focus on both simultaneously, if we pick the incorrect robot initially, we cannot guarantee that it will eventually terminate, and thus cannot move on to focus on the other robot.

We note that, the above proof strategy and observation can be extended to multiple robots when the adversary is made more powerful. Define a {\em t-adversary} as one which can remove at most $t$ edges in the graph in the given round. Note that, Observation~\ref{obs:never-meet} holds for $t+1$ robots starting at unique positions on a ring of size at least $t+2$. Thus, we can use a similar proof strategy to prove the following theorems.\footnote{We briefly recap the strategy. First run a supposed exploration algorithm $\mathcal{A}$ on a ring of size $n$ that terminates in $T(n)$ rounds. Subsequently, construct a ring of size $(t+1)(2T(n) + 2)$ and place the $t+1$ robots equidistant from each other. Now, for partial termination (explicit termination), run $\mathcal{A}$ on this larger ring and simulate the execution on the smaller ring for $1$ robot ($t$ robots) until it settles down and subsequently trap the remaining $t$ robots ($1$ robot) on already explored nodes. The total explored number of nodes will fall short the total size of the ring and hence $\mathcal{A}$ is incorrect.}

\begin{theorem}
There does not exist any exploration algorithm with partial termination of anonymous rings of unknown size at least $t+2$ by $t+1$ robots in the presence of a $t$-adversary, even when robots have distinct IDs, common chirality, the edge crossing detection capability, and when the scheduler is fully synchronous.
\end{theorem}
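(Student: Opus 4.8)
The plan is to reduce to the two-robot argument by first upgrading Observation~\ref{obs:never-meet} to the $t$-adversary setting and then replaying the construction used in the first partial-termination theorem of this section.

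First I would establish that a $t$-adversary can keep $t+1$ robots, started on distinct nodes of a ring of size at least $t+2$, from ever meeting. The key combinatorial fact is that in any round at most $t$ pairs of robots can sit on adjacent nodes: the $t+1$ robots partition the ring into $t+1$ arcs whose lengths sum to $n\geq t+2$, so not all arcs can have length $1$, and hence at most $t$ of the cyclically consecutive pairs are at distance exactly $1$. Since in a single round a meeting (a shared node, or a crossed edge, the latter being relevant because of edge crossing detection) can only occur between robots currently at distance $1$, the adversary removes the shared edge of each such adjacent pair, using at most $t$ deletions, and thereby prevents every possible meeting in that round. By induction on the rounds the robots remain on distinct nodes forever, which is the generalized observation asserted in the footnote.

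With this in hand I would mirror the proof of the partial-termination theorem for two robots. Suppose $\mathcal{A}$ solves exploration with partial termination using $t+1$ robots against a $t$-adversary; run it on a ring of size $n$ under a never-meet strategy, and let $R_1$ be the first robot to terminate, say after $T(n)$ rounds. I then build a ring of size $(t+1)(2T(n)+2)$, place the $t+1$ robots equidistantly (each cyclic gap equal to $2T(n)+2$), and have the adversary reproduce, round by round, exactly the states of the two edges incident to $R_1$ that occurred in the small-ring execution; this costs at most two of the $t$ deletions. Because the robots start at pairwise distance $2T(n)+2$ and the distance between any two can shrink by at most $2$ per round, no robot can reach another within $T(n)$ rounds, so $R_1$'s local view (including the absence of any crossing-detection event) is identical to its small-ring view, and $R_1$ terminates at round $T(n)$ having visited at most $T(n)+1$ nodes. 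From round $T(n)+1$ onward, the adversary devotes one deletion to each of the remaining $t$ robots, blocking whichever edge it wishes to cross, so none of them ever explores a new node; during the first $T(n)$ rounds each of them also visits at most $T(n)+1$ nodes. Hence at most $(t+1)(T(n)+1)$ nodes are ever explored, which is strictly less than the ring size $(t+1)(2T(n)+2)$, contradicting the correctness of $\mathcal{A}$.

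The main obstacle is the generalized observation rather than the reduction: I must ensure that simultaneously neutralizing all dangerous adjacencies never exceeds the budget of $t$ deletions, which is exactly what the ``not all arcs have length $1$'' counting guarantees, and that the only way two robots can interact in a single round is from distance $1$, so that controlling adjacencies truly controls all meetings. A secondary point to verify carefully is that faithfully mirroring $R_1$'s view in the large ring needs at most two deletions---the two edges incident to $R_1$---so that it fits within the budget even while the other $t$ robots are left free, since the spacing already guarantees they cannot meet anyone within $T(n)$ rounds.
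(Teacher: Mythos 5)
Your overall strategy is exactly the one the paper sketches in its footnote (run $\mathcal{A}$ on a ring of size $n$, re-deploy the $t+1$ robots equidistantly on a ring of size $(t+1)(2T(n)+2)$, replay $R_1$'s small-ring view, then trap the remaining $t$ robots), and your node-counting and budget accounting for that reduction are sound; note only that for $t=1$ mirroring $R_1$'s incident edges costs at most \emph{one} deletion, not two, since the missing $R_1$-incident edges are a subset of the at most $t$ edges the small-ring adversary removed, so the budget is respected for every $t$.

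The genuine gap is in your generalized never-meet observation. You assert that a meeting in a single round ``can only occur between robots currently at distance $1$,'' but this is false under the paper's scheduling (robots commit to moves, then the adversary deletes edges, then moves execute): two robots at distance $2$ that both move inward meet at the intermediate node, and your strategy of deleting only the edges inside length-$1$ gaps does nothing to stop this. The claim is still true, but the counting has to be refined: call a gap \emph{dangerous} if executing the committed moves would produce a co-location or crossing across it; a length-$1$ gap is dangerous only if at least one endpoint robot moves into it, a length-$2$ gap only if both endpoints move into it, and longer gaps are never dangerous. Since each robot moves into at most one of its two incident gaps, if $a$ length-$1$ gaps and $b$ length-$2$ gaps are dangerous then $a+2b\leq t+1$; having $a+b=t+1$ would force $b=0$ and all $t+1$ gaps of length $1$, contradicting $n\geq t+2$. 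Hence at most $t$ gaps are dangerous, and one deletion per dangerous gap (the shared edge for a length-$1$ gap, either inner edge for a length-$2$ gap) keeps the robots apart within the $t$-adversary's budget. With that repair your proof goes through; the same distance-$2$ subtlety is, incidentally, glossed over in the paper's own description of its two-robot adversary.
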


\begin{theorem}
There does not exist any exploration algorithm with explicit termination of anonymous rings of unknown size at least $t+2$ by $t+1$ robots in the presence of a $t$-adversary, even when robots have distinct IDs, common chirality, the edge crossing detection capability, access to randomness, and when the scheduler is fully synchronous.
\end{theorem}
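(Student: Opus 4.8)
The plan is to generalize the proof of the two-robot, explicit-termination theorem above, simply replacing the single edge of the standard adversary with the $t$ edges now available to a $t$-adversary. The foundation is the generalized form of Observation~\ref{obs:never-meet}, namely that on a ring of size at least $t+2$ a $t$-adversary can keep $t+1$ robots starting at distinct nodes from ever sharing a node. First I would verify this building block carefully, since everything rests on it: whenever two robots sit on adjacent nodes the adversary deletes the edge between them, and whenever two robots at distance two both aim for the empty node between them it deletes one of the two edges incident to that node. Because $t+1$ robots on a ring of size at least $t+2$ leave at least one node free, the occupied nodes never form the whole cycle, so in any round these ``conflict'' edges form a union of paths and number at most $t$, keeping the adversary within budget.

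Next, mirroring the two-robot argument, I would run a supposed explicit-termination algorithm $\mathcal{A}$ on a ring of size $n$ and define $T(n)$ to be the maximal number of rounds until all $t+1$ robots terminate, taken over all random choices and over all non-meeting $t$-adversary strategies of the above type; by the explicit-termination guarantee this is finite, and crucially every individual robot terminates by round $T(n)$ in each such execution. I would then build a ring of size $(t+1)(2T(n)+2)$ and place the $t+1$ robots equidistantly, so consecutive robots start $2T(n)+2$ apart and hence no two can meet within $T(n)$ rounds regardless of how they move.

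The execution on the large ring is then driven in two phases. In Phase~1 (rounds $1$ through $T(n)$) the adversary focuses on one robot $R_1$ and feeds it, round by round, the local view it had in some fixed non-meeting execution on the small ring; since $R_1$ meets no one, that view depends only on the status of its two incident edges, so reproducing it costs at most two deletions per round. The other $t$ robots roam freely during this phase. By construction $R_1$ terminates by round $T(n)$, believing exploration complete, having visited at most $T(n)+1$ nodes, and each free robot likewise visits at most $T(n)+1$ nodes in these rounds. In Phase~2 the terminated $R_1$ is frozen and the adversary traps each of the remaining $t$ robots in a two-node strip (one deletion per robot), so no further nodes are ever explored. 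Summing, at most $(t+1)(T(n)+1)+t$ nodes are visited over the entire infinite execution, which is strictly less than the ring size $(t+1)(2T(n)+2)$; hence exploration never completes even though $R_1$ has already terminated, and the remaining robots, which must terminate by explicit termination, can only do so with exploration still incomplete — a contradiction.

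The step I expect to be the main obstacle is the per-round edge-budget bookkeeping, i.e. confirming that the adversary's two tasks never collide. This is exactly why I separate them in time: fooling $R_1$ (cost at most $2$) happens before trapping the others (cost $t$), so each phase stays within the $t$-edge budget, and in the boundary case $t=1$ the cost of fooling drops to $1$ because the two-robot non-meeting adversary removes only a single edge. A secondary point to pin down is the finiteness and the correct quantifier order in the definition of $T(n)$, precisely as in the two-robot theorem, together with the claim that the non-meeting observation genuinely survives the passage from one to $t+1$ robots against a $t$-adversary.
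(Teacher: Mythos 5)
Your proposal is correct and follows the same blow-up skeleton as the paper's (footnoted, very terse) argument: extend Observation~\ref{obs:never-meet} to $t+1$ robots against a $t$-adversary, define $T(n)$ as the worst-case time for \emph{all} robots to terminate over non-meeting executions (exactly how the paper sidesteps the ``which robot terminates first'' problem that randomness creates), build a ring of size $(t+1)(2T(n)+2)$ with equidistant robots, fool some robots into terminating, trap the rest, and count explored nodes. The one place you genuinely diverge is the fool/trap split: the paper's sketch says that for explicit termination one should simulate the small-ring views of $t$ robots and then trap the remaining one, whereas you simulate a single robot $R_1$ and trap the other $t$. Your choice is the more defensible one: reproducing one robot's local view costs at most $\min(2,t)$ deletions per round (only its two incident edges matter), trapping $t$ robots in two-node strips costs at most $t$ deletions per round, and the two tasks are separated in time, so you stay within the $t$-edge budget throughout; by contrast, simultaneously reproducing the views of $t$ mutually distant robots can require up to two deletions per robot and is not obviously within budget. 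Your $t=1$ boundary case and the final count $(t+1)(T(n)+1)+t < (t+1)(2T(n)+2)$ are sound. The only soft spot is the generalized non-meeting observation itself: the ``union of paths'' heuristic is the right intuition (the free node guaranteed by $n \ge t+2$ is what caps the conflict edges at $t$), but a fully careful argument must also handle a robot stepping onto a stationary robot's node and two robots converging on a common empty neighbour; since the paper asserts this extension without proof, you are no less rigorous than the source.
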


\section{Deterministic Exploration with 3 Robots}
\label{explore}
In Section~\ref{sec:impossibility}, we showed that it is impossible to explore an anonymous and unknown size dynamic ring with two robots and achieve explicit termination. 
In this section, we present a deterministic solution for this exploration problem using three robots. We assume that each robot has a unique ID which is not known to the other robots unless they meet. We further assume that when two robots cross an edge (from opposite directions) in the same round, they {\em sense} each other and the meeting happened.\footnote{Here by `sense' we mean the two robots can detect the edge crossing and can exchange information including IDs.} 
Note that this edge crossing detection assumption does not help two robots (with unique IDs) to solve the exploration problem (see Section~\ref{sec:impossibility}). 
The outline of the algorithm is discussed below. The pseudocode is given in Algorithm~\ref{alg:explore3}.


The algorithm works in four stages: (Stage~1) first meeting of two robots, (Stage~2) second meeting of two robots, (Stage~3) exploration detection, and (Stage~4) termination. 

\textbf{Stage~1} ensures the first meeting of any two robots at some node  or via  edge crossing in the ring. For this, we need to make sure that at least two of them move in the opposite direction; otherwise if all the three robots move in the same direction at the same speed, they may never meet even if the adversary never deletes any edge. Thus we have to break this symmetry deterministically. For this, each robot moves based on the bit string of its ID. Each robot moves in phases and each phase consists of several rounds. More precisely, the number of rounds in the $i$-th phase is $2^i$. Without loss of generality, say that a robot moves in what it considers the clockwise (left) direction in phase $i$ when $b_{i \bmod k} = 0$. When $b_{i \bmod k} = 1$, the robot moves in the other (right) direction.
The first stage ends when at least two robots meet. Let us mark or name the two robots $A$ and $B$, where the larger ID one is $A$ and the other is $B$. Note that, the third robot may not know about this meeting and hence is unaware of the end of the first stage. Let us call the third robot as $C$.\footnote{The third robot gets named $C$ only after it meets either $A$ or $B$ at the end of Stage~2.} If these three robots meet at the same time (at some node) then the smallest ID robot gets named $C$. Notice that if two or three robots are positioned at the same node initially then the algorithm starts from Stage~2. 

Then \textbf{Stage~2} starts (which is known to at least $A$ and $B$). The robots $A$ and $B$ start moving in opposite directions from the meeting point (node) from Stage~1 and never change their directions until they terminate the algorithm. $A$ and $B$ each maintain a counter which counts the number of steps the robot successfully moves. Furthermore, each robot stores the ID of the other. Note that a robot cannot move in a particular direction in a round if the corresponding edge is missing (i.e., deleted by the adversary).  
Each of the robots continues to move until one of them meets the third robot. The second stage ends when either of $A$ and $B$ meets the third robot, which subsequently gets named $C$. Without loss of generality, assume that $A$ and $C$ meet. Then $A$ shares the following stored information with $C$: ID of $B$, the direction of $B$'s movement and the number of steps $A$ has successfully moved after Stage~1.\footnote{Note that even if $A$ and $C$ do not have shared chirality, the direction of $B$ can be conveyed as follows. Depending on how $A$ and $C$ meet, $C$ will immediately know the direction $A$ moves in or can take a round or two to understand this based on how $A$ and $C$ both move in their ``clockwise" direction and see if they moved to the same node or not. Once $C$ determines the naming mechanism $A$ uses for directions, $C$ can understand exactly which direction $B$ is moving in.} $C$ stores all this information. 
$A$ and $C$ also store each other's IDs.    

Then \textbf{Stage~3} starts, which ensures the completion of the ring exploration by at least two robots. This can happen in two ways. (I) if $A$ and $B$ meet (again) then it is guaranteed that exploration of the ring is complete. This scenario is depicted in Fig.~\ref{app:fig:meetAB}. (II) The adversary can prevent the meeting of $A$ and $B$ by removing an edge between them. Recall that $A$ and $B$ are moving in opposite directions. Eventually these two robots will reach two adjacent nodes and may wait for the (missing) edge to move. In this scenario, exploration is completed but $A$ and $B$ do not know this as $n$ is unknown. If the adversary does not remove the edge in one round, then $A$ and $B$ will meet. Therefore, the adversary will need to remove the edge indefinitely. In this situation, robot $C$ is used to determine the completion of exploration. From the meeting point of $A$ and $C$ in Stage~2, robot $C$ starts moving towards the opposite direction of $A$ (i.e., in the same direction of $B$) and $A$ continues moving in its fixed direction. Robot $B$ does not know that $A$ and $C$ met in Stage~2, and continues to move in its fixed direction. Robot $C$ moves towards $B$ until it catches $B$. Subsequently, $C$ changes its direction and move towards $A$ until it catches $A$. $C$ then repeats this process and moves back to $B$. Essentially, $C$ performs a zig-zag movement between $A$ and $B$ and checks if the distance (i.e., the hop distance) from $A$ to $B$ and $B$ to $A$ are the same. For this, the robot $C$ maintains two variables $AtoB$ and $BtoA$. $AtoB$ stores the number of successful steps (moves) towards $B$, starting from $A$ until it meets $B$, and $BtoA$ stores a similar number.
 When these two distances are equal, i.e., $AtoB = BtoA$, the algorithm determines that exploration is complete, as this condition implies that $A$ and $B$ lie on adjacent nodes whose edge has been removed by the adversary and thus $C$ has explored the entire graph. 
This scenario is depicted in Fig.~\ref{app:fig:exmode}. Therefore, either $A$ and $B$ meet and detect that  exploration is completed, or $C$ deduces the completion from the hop-distance counts. In the latter case, $C$ would be co-located in a node with either $A$ or $B$ and can thus inform that robot of the completion of exploration. Thus, at least two robots detect the exploration completion but the third robot may be unaware of this. Then we begin Stage~4 to ensure that all robots are made aware of exploration completion and can thus terminate.

\begin{figure}
    \centering
    \includegraphics[scale = 0.45]{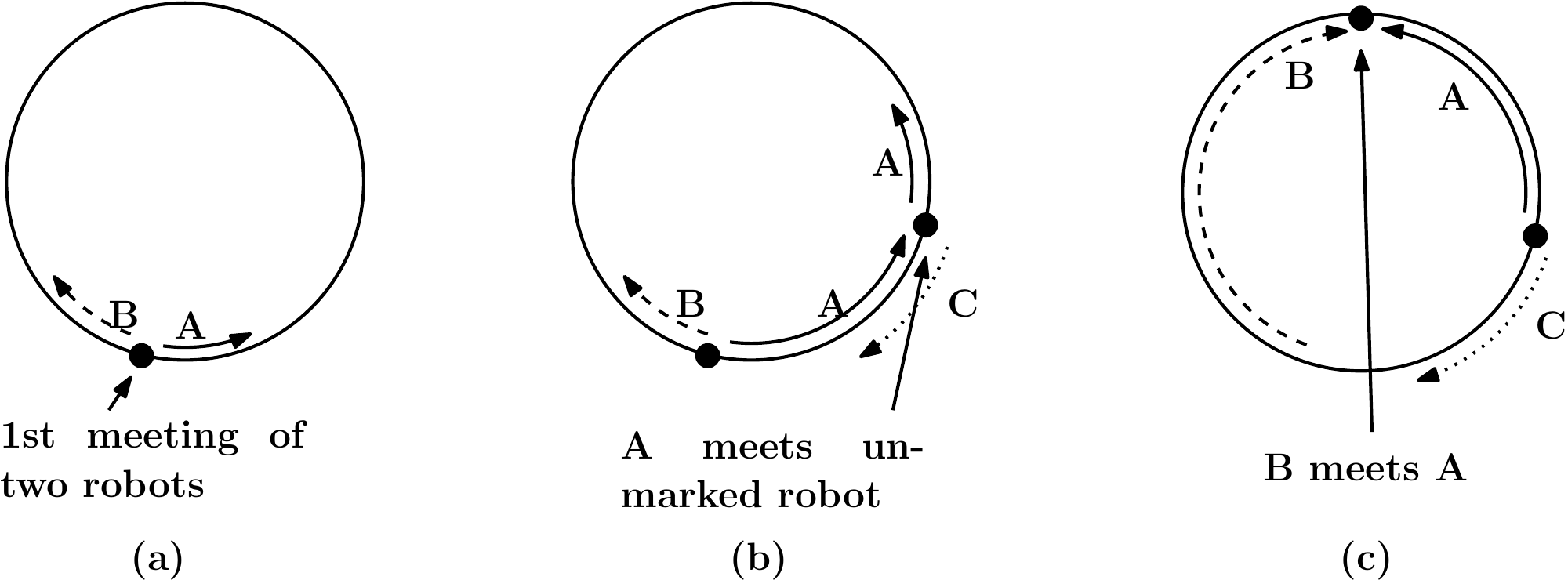}
    \caption{Here continuous, dashed and dotted lines shows the movement of robot $A$, $B$ and $C$ respectively. (a) Shows the scenario when two unmarked robot meet and get marked as $A$ and $B$, i.e. Stage 1 ends and Stage 2 starts. (b) Shows the scenario when $A$ and the unmarked robot meet and the unmarked robot gets marked as $C$, i.e. Stage 2 ends Stage 3 starts, (c) Shows the scenario when $A$ and $B$ meets, i.e. Stage 3 ends.}
    \label{app:fig:meetAB}
\end{figure}

\begin{figure}
    \centering
    \includegraphics[scale = 0.45]{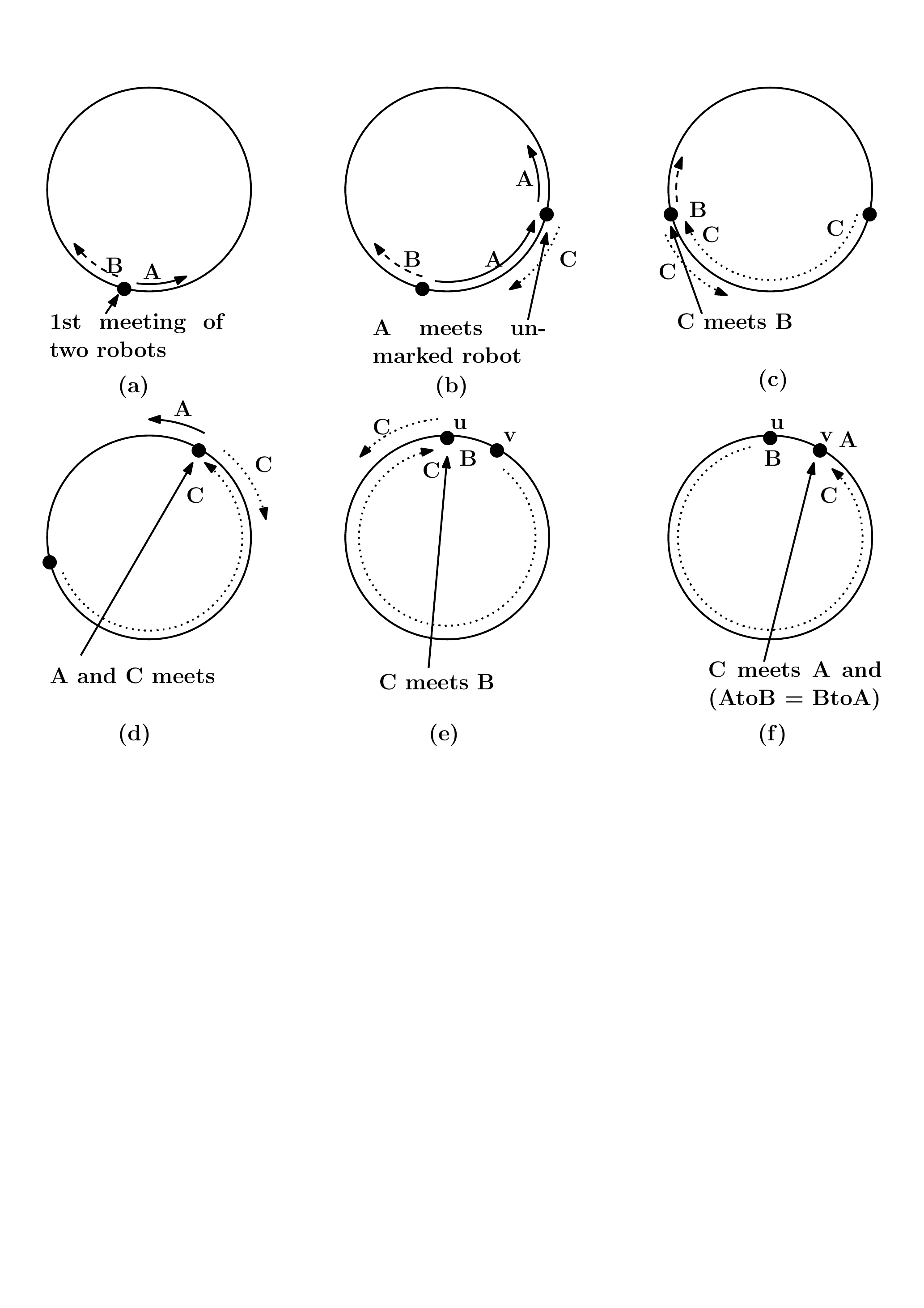}
    \caption{Here continuous, dashed and dotted lines shows the movement of robot $A$, $B$ and $C$ respectively. (a) Shows the scenario when two robots meet for the first time and get marked as $A$ and $B$, i.e. Stage 1 ends and Stage 2 starts. (b) Shows the scenario when $A$ and the 3rd unmarked robot meet and the unmarked robot gets marked as $C$, i.e. Stage 2 ends Stage 3 starts, (c) Shows the scenario when $C$ meets $B$ and finds $AtoB \neq BtoA$, (d) Shows the scenario when $C$ again meets $A$ at node $v$ and finds $AtoB \neq BtoA$ (e) Shows the scenario when $C$ again meets $B$ at node $u$ and finds $AtoB \neq BtoA$ (f) Shows the scenario when $C$ again meets $A$ at node $v$ and finds $AtoB = BtoA$ and Stage 3 ends. $u$ and $v$ are two consecutive nodes in $\R$. As nodes in $\R$ are not identifiable, the name of nodes are used for the ease of understanding.}
    \label{app:fig:exmode}
\end{figure}

In \textbf{Stage~4}, the two robots which detected the completion of exploration move in order to inform the third robot about the completion, and all robots terminate (to guarantee the explicit termination). Recall that the robots $A$ and $B$ maintained a counter of their successful moves starting from their meeting in Stage~2. If $A$ and $B$ meet at the same node again, then the sum of their counters is exactly $n$. If they meet by crossing each other, then the sum of their counters is exactly $n+1$. If they do not meet but the scenario from Stage~3 plays out, then $C$ and one of the two robots knows the value of $n-1$. 
In any case, there are two robots which know that exploration is completed and know the value of $n$ at some node. Then these two robots start moving in opposite directions to each other for at most $n$ rounds. When one of them meets the third robot, it informs the third robot about the completion of exploration and they both terminate. The other robot also terminates after $n$ rounds (if it does not meet the third robot). We show that after $n$ rounds, at least one of the robots (which detects the completion of exploration) meets the third robot and informs it about the completion of exploration.

The formal pseudocode of the algorithm is given in Algorithm~\ref{alg:explore3}. The pseudocode is written for the situation where at most two robots start on the same node or meet at the same node, to aid readability. However, it is easy to modify the code to handle the case of 3 robots meeting at the same node or initially located at the same node. The pseudocode requires us to define some parameters. Each robot maintains the following variables: 
\begin{itemize}
    \item {\em size}: size of the ring (initialized to $\infty$).
    \item {\em step}: stores the number of successful steps or moves of a robot.
    \item {\em mark}: takes value in $\{A, B, C\}$. Initialized to $NIL$ for all the robots.
\end{itemize}
Different events are defined below, which may occur when two or more robots meet during the execution of the algorithm.  
\begin{itemize}
    \item {\em meetSmall:} when a robot meets with a smaller ID robot and {\em mark} = $NIL$ for both the robots. 
    \item {\em meetLarge:} when a robot meets with a larger ID robot and {\em mark} = $NIL$ for both the robots.
    \item {\em meetMark:} when a robot, whose $mark = NIL$, meets another robot whose $mark \neq NIL$.
    \item {\em meetX:} when a robot, whose $mark \neq NIL$, meets another robot whose $mark = X$ for $X\in \{A, B, C\}$.
    \item {\em meetTer:} when a robot meets another robot which is executing {\sc startTermination} procedure.
\end{itemize}
Notice that when {\em meetSmall} event occurs for one robot, then {\em meetLarge} occurs for the other robot at the same time. In the pseudocode some functions are used. Small description of those functions are given below.
\begin{itemize}
    \item {\sc Move($dir$)}: By executing this function a robot moves one step towards the direction specified in $dir$. Here value of $dir$ can be $left$ or $right$.
    \item {\sc getMark()}: When two robots meet, this function returns the value stored at variable $mark$ on the other robot.
    \item {\sc getSteps($mark$)}: When two robots meet, this function returns the value stored at $step$ on the other robot.
    \item {\sc askTerminate($omark$)}: When two robots meet, this function signals to start termination to the other robot for which $mark = omark$.
    \item {\sc recTerminate()}: When two robots meet, this function detects whether a robot has received a signal to start termination or not. This function returns $1$ if signal to start termination is received.
\end{itemize}

In a ring there are two directions to move at each node. A robot chooses a direction arbitrarily and calls that direction as {\em left} and the other direction as {\em right}. Since the robots do not share any common chirality, the left and right directions of a robot may differ from the other robots (initially). When two or more robots meet, they can decide upon the directions and share a common chirality. The following algorithm (Algorithm~\ref{alg:explore3}) is executed by all the robots in parallel.

\subsection{Correctness and Time Analysis}\label{sec:correctness-and-time}
We first discuss the correctness of the algorithm in the following lemma. 


\begin{lemma}\label{lem:correctness}
Algorithm \ref{alg:explore3} correctly explores the dynamic ring and guarantees explicit termination.  
\end{lemma}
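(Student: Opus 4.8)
The plan is to prove correctness by verifying that each of the four stages terminates and correctly hands off to the next, and that upon termination every node has been explored. I would structure the argument stage by stage, establishing an invariant at each stage boundary, and the crux will be proving that the two robots moving in opposite directions (namely $A$ and $B$) together with the zig-zag robot $C$ genuinely cover every node before any robot terminates.

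First I would handle \textbf{Stage~1}, arguing that two robots must eventually meet. The key point is that the ID-based phase schedule breaks symmetry: since the IDs are distinct, there is some bit position where two robots differ, so in the corresponding phases at least two robots move in opposite directions. I would show that in such a phase, if the adversary cannot indefinitely separate all pairs --- recall from the discussion that the adversary can only delete one edge per round, so it cannot block two approaching robots forever while also maintaining the ring --- two robots are forced to meet either at a node or via an edge crossing. (Here I would lean on the phase lengths $2^i$ to guarantee enough successful steps accumulate despite adversarial edge deletions.) This establishes the Stage~1 invariant: two robots become $A$ and $B$ at a common location with agreed chirality.

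Next, for \textbf{Stage~2} and \textbf{Stage~3}, I would argue that starting from the $A$-$B$ meeting point, $A$ and $B$ walk in opposite fixed directions, so the third robot lies somewhere on the arc between them and one of $A,B$ must encounter it (becoming $C$). The central correctness claim is the \emph{exploration-detection} invariant: exploration is complete exactly when $A$ and $B$ occupy adjacent nodes (with the connecting edge removed) or coincide. I would verify the two cases. In case (I), if $A$ and $B$ ever meet again, every node between their Stage~2 start point has been swept by one of the two oppositely-moving robots, so the ring is covered, and the counter sum equals $n$ (same node) or $n+1$ (edge crossing). In case (II), I would show $C$'s zig-zag correctly detects completion: $C$ measures the hop distance $AtoB$ and $BtoA$ on successive passes, and the equality $AtoB = BtoA$ holds precisely when $A$ and $B$ are on adjacent nodes separated by the perpetually-removed edge, at which point $C$ has traversed the whole ring. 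The subtle part is ruling out a \emph{false positive}: I must confirm that $AtoB = BtoA$ cannot hold before $A$ and $B$ have stabilized on adjacent nodes. This follows because while $A$ and $B$ are still moving apart the two measured distances differ by the net displacement accrued between passes, so they coincide only at the stable adjacent configuration.

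Finally, \textbf{Stage~4} is the easiest: two robots now know $n$ and know exploration is complete, so they walk in opposite directions for at most $n$ rounds. I would argue that the third (uninformed) robot lies within hop-distance $n$ along one of the two directions, so one informing robot reaches it within $n$ rounds --- even under adversarial edge deletion, since blocking one robot for a round lets the other advance, and the combined coverage of a full ring traversal guarantees contact --- after which all three terminate. I expect the \textbf{main obstacle} to be the Stage~3 false-positive analysis: carefully arguing that the adversary cannot engineer a spurious $AtoB = BtoA$ equality at an intermediate configuration, and that $C$'s counters remain well-defined across direction reversals despite the adversary's ability to delay $C$ by one edge per round. Everything else reduces to bookkeeping on the step counters and the observation that one-edge-per-round adversaries cannot stall two oppositely-directed robots indefinitely.
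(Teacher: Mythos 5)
Your overall plan matches the paper's proof almost exactly: the paper also argues stage by stage, splits Stage~3 into the same two cases ($A$ and $B$ meet again; $C$ detects $AtoB = BtoA$), and closes Stage~4 by invoking the lemma that two oppositely-moving robots among three must meet within $n$ rounds. So the decomposition, the case analysis, and the termination argument are all the same as the paper's.

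The one place where your argument falls short is precisely the step you flag as the main obstacle, the Stage~3 false-positive analysis. You claim that $AtoB = BtoA$ can occur only at the stable adjacent configuration because ``the two measured distances differ by the net displacement accrued between passes.'' That observation is correct --- if $a_0$ is $A$'s step count when $C$ leaves it and $a_2$ its count when $C$ returns, then $AtoB - BtoA = a_0 - a_2$ --- but zero net displacement of $A$ over one of $C$'s round trips does not by itself imply that $A$ and $B$ are adjacent: the adversary could block $A$ alone for the entire round trip while $A$ and $B$ are still far apart, which would produce equal counts with the ring unexplored. Your dichotomy (``still moving apart'' versus ``stable adjacent configuration'') silently skips this intermediate case. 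What actually rules it out is the geometry of the zig-zag combined with the one-edge-per-round budget: $C$ travels through the already-explored arc and approaches each of $A$ and $B$ from behind (the chased robot recedes from $C$ in the same rotational direction), so $C$ can complete a leg only if the chased robot is itself blocked on sufficiently many rounds; on any such round the single removed edge must simultaneously block the stationary robot and the chased robot, which is possible only when $A$ and $B$ occupy the two endpoints of that edge facing each other --- exactly the configuration in which the ring is fully explored. This is the content of the paper's remark that the scenario ``is only possible if both $A$ and $B$ are trying to traverse the same edge from adjacent nodes, since the adversary can remove only one edge at a time''; without some version of that argument your proof does not exclude the false positive.
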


\begin{proof} We show that by the end of Stage~3, the ring is explored (by at least two robots). Stage~3 ends when one of the following two cases occurs: (I) robots $A$ and $B$ meet (II) robot $C$ meets $A$ and finds $AtoB = BtoA$ (after zig-zag movement) or $C$ meets $B$ and finds $AtoB = BtoA$.

\noindent {\em Case~I:} Robots $A$ and $B$ meet again after their first meeting in Stage~1. Since they move in opposite directions and never change directions after their first meeting, it is obvious that when they meet again, exploration is completed. The robots can also calculate the value of $n$ when they meet again. 

\noindent {\em Case~II:} Robot $C$ meets either of $A$ and $B$, and learns that $AtoB = BtoA$. This implies that robot $C$ has traversed same number of steps in two consecutive zig-zag movements. This scenario is only possible if both $A$ and $B$ are trying to traverse the same edge from adjacent nodes, since the adversary can remove only one edge at a time. Thus when robot $C$ determines that $AtoB = BtoA$ after two consecutive zig-zag movements, it is guaranteed that $C$ has explored all nodes in the ring and the size of the ring is $AtoB + 1$ or $BtoA + 1$. Fig.~\ref{app:fig:exmode} depicts this scenario. 

Thus, in both cases, at least two robots detect that exploration is completed. Moreover, the robots which detect this also know the size of the ring $n$ at the end of Stage~3. Thus in the termination stage, these two robots, which detected the completion of exploration, start moving in two opposite directions for at most $n$ rounds and terminate. It follows from the proof of Lemma~\ref{lem:meeting} (below) that after $n$ rounds, at least one of them meets the 3rd robot. So the 3rd robot also gets the information of the completion of exploration and terminates. Thus, explicit termination is guaranteed at the end of Stage~4.    
\end{proof}

Let us now analyze the time complexity of the exploration algorithm. We calculate the time taken in each stage of the algorithm.  

\begin{lemma}
\label{lem:meeting}
If two among the three robots move in opposite directions in a dynamic ring of size $n$, then at least two of them meet in at most $n - 2$ rounds. 
\end{lemma}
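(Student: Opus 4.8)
The plan is to show that if two robots move in opposite directions and never stop trying, the adversary cannot prevent at least one pair from meeting within $n-2$ rounds. Suppose robots $A$ and $B$ move in opposite (fixed) directions. Consider the arc of the ring lying ``between'' them in one of the two directions; this arc contains some number $d$ of edges, where $d \leq n-1$ initially. The key invariant I would track is the hop-distance (shortest number of edges) separating $A$ and $B$ along this closing arc. Since $A$ and $B$ move toward each other, in every round each robot that successfully moves shrinks this distance by one. The adversary can block at most one edge per round, so it can stall at most one of the two robots in any given round; the other robot still advances, reducing the gap by at least one per round.

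First I would set up notation precisely: let $d_t$ denote the hop-distance between $A$ and $B$ along the arc in which they are approaching each other at the start of round $t$. I would argue that $d_{t+1} \leq d_t - 1$ as long as $d_t \geq 2$, because in each round the adversary removes at most one edge, so at most one of the two approaching robots can be blocked, and the remaining robot moves one step closer. (When the two robots are about to traverse the same single edge in opposite directions, the adversary removing that edge blocks both, but that is exactly the edge-crossing/adjacency situation $d_t = 1$, handled separately.) Then I would conclude that after at most $d_0 - 1$ rounds the distance reaches $1$, i.e. $A$ and $B$ occupy adjacent nodes.

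Next I would handle the endgame at $d_t = 1$. Once $A$ and $B$ are on adjacent nodes, the only way to keep them apart is for the adversary to delete the single edge between them. But if it does so, then \emph{by Observation~\ref{obs:never-meet}} reasoning the edge is the unique blockable edge that round; any robot attempting to cross it is stalled, yet the two robots are already on adjacent nodes and the next successful attempt forces a meeting (either at a node or via edge crossing, which counts as a meeting under the edge-crossing detection model). The crucial counting point is that the worst-case initial separation along the closing arc is at most $n-1$, but I need the bound $n-2$, so I would need to be careful: the total number of rounds to collapse $d_0$ down to a meeting is $d_0 - 1$, and since the arc on which they approach can be chosen as the shorter of the two arcs, $d_0 \leq \lfloor n/2 \rfloor$ is tempting, but the robots' directions are fixed and possibly both chosen ``the long way,'' so the relevant arc has $d_0 \leq n-1$; then collapsing to a meeting costs at most $n-2$ successful net reductions. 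I would verify this count matches $n-2$ exactly rather than $n-1$ by accounting for the fact that meeting occurs once the distance hits $0$ (or crossing at distance $1$), so the number of rounds is one fewer than the naive distance.

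The main obstacle I anticipate is the bookkeeping at the boundary: the adversary's optimal strategy is to always block the single robot on the shrinking side and stall it, but since only one edge can be removed per round it cannot stall both approaching robots simultaneously unless they are contending for the same edge, which only happens at distance $1$. Reconciling this with the exact constant $n-2$ (as opposed to $n-1$ or $n$) is the delicate part, and I would resolve it by a careful case analysis of whether the final meeting is a node-meeting (sum of steps equals $n$) or an edge-crossing (sum equals $n+1$), mirroring the counting already used in the description of Stage~4. I would also need to note that the third robot's movement does not interfere with this argument, since the claim only requires \emph{two} robots moving in opposite directions to meet, and the adversary's single edge removal per round is the only constraint that matters.
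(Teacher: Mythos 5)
There is a genuine gap, and it is fatal to the approach: your argument deliberately discards the third robot, but the lemma cannot be proved for the two opposite-moving robots alone. Your endgame at $d_t=1$ asserts that once $A$ and $B$ are on adjacent nodes, ``the next successful attempt forces a meeting.'' It does not. In the $1$-interval connected model the adversary may remove the \emph{same} edge in every round indefinitely; this is exactly the content of Observation~\ref{obs:never-meet} (which you cite, but which says the opposite of what you need): two designated robots can be prevented from \emph{ever} meeting. So the two approaching robots can be stalled at a common missing edge forever, and no amount of bookkeeping on $d_t$ for that pair yields a finite meeting time. The statement ``at least two of them meet'' quantifies over all three robots, and the third robot is precisely what rescues the claim: when the adversary spends its single edge removal on the edge between the two contending robots, the third robot is unblocked, advances one step per round toward whichever of the two it is trailing, and catches it. Your closing remark that ``the third robot's movement does not interfere with this argument, since the claim only requires two robots moving in opposite directions to meet'' is therefore a misreading of the statement and removes the essential ingredient.

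For comparison, the paper proves the lemma by induction on the ring size using all three robots. The base case ($n=3$, distinct nodes) already exhibits the key dichotomy: either the two opposite-movers traverse the edge between them and meet, or that edge is the one removed, in which case the third robot moves and lands on one of them. The inductive step uses the observation that a single edge removal can block two robots only when they contend for the same edge from opposite sides, so in every round at least one robot successfully moves; contracting that progress reduces a size-$(l+1)$ configuration to a size-$l$ one, giving the $n-2$ bound. Your distance-decrement idea is fine for the phase where $d_t\geq 2$ (and is essentially the same monotonicity the paper exploits), but any correct proof must carry a three-robot case analysis through the $d_t=1$ bottleneck rather than appealing to an eventual ``successful attempt'' that the adversary is free to deny forever.
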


\begin{proof}
If any two among these three robots are initially located at the same node in the ring, then this lemma holds trivially. 

We assume that the three robots are initially located on three different nodes in the ring. In this scenario, we prove this lemma by induction on the size of the ring. Recall that a robot always tries to move in some specified direction so long as the corresponding edge is available, i.e. it does not voluntarily remain stationary. 

\noindent \textbf{Base case:} Consider the ring of size $3$ with each robot initially located on a different node. As the adversary can remove only one edge and at least two robots are moving in opposite directions, at least two of the robots will be on the same node after $1$ step (either the two moving in opposite directions meet or if the edge between them is not available then the third one catch one of them). Thus, the base case is true.

\noindent \textbf{Inductive step:} Assume that the claim holds on rings up to size $l$. We show that the claim also holds for rings of size $l+1$.  
In a ring of size $l+1$, when all three robots are located at different nodes, the maximum min-distance between two robots is at most $l-1$.\footnote{Here, min-distance refers to the shortest path distance between two robots on the ring.} As none of the robots are changing their direction and the adversary can remove only one edge in a round, at least one robot can successfully move one step in one round. Hence, after one round the situation on the ring of size $l+1$ maps to a situation on the ring of size $l$ (or at least two of them meet). As this lemma holds for a ring of size $l$ or less, at least two robots meet at some node after at most $l-2$ rounds. Therefore, it takes at most $(l-2)+1$, i.e., $l-1$ rounds, in a ring of size $l+1$. Hence the lemma holds.  
\end{proof}

\begin{lemma}
\label{lem:phase}
In Stage~1 of the algorithm, there exists a phase $i \in [0, k-1]$ when at least one robot moves in the direction opposite to the direction followed by other two robots, where $k$ is the length of the ID bit-string of the robots.  
\end{lemma}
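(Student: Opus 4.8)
The plan is to reduce the geometric statement about physical movement directions to a purely combinatorial statement about the three ID bit-strings, and then rule out the bad case by a short counting-style argument over $\{0,1\}^k$. First I would fix notation that absorbs the absence of common chirality. Since all three robots begin Stage~1 simultaneously and phase $i$ consists of $2^i$ synchronous rounds, the robots are phase-aligned, so ``phase $i$'' is globally well defined, and throughout phase $i$ each robot moves in a single direction determined by bit $b_{i \bmod k}$ of its ID; it therefore suffices to consider $i \in [0,k-1]$. Let robot $R_j$ have ID bits $b^{(j)}_{k-1}\cdots b^{(j)}_0$ and let $\ell_j \in \{0,1\}$ be the single bit encoding its chirality (i.e.\ whether its self-declared ``left'' points physically clockwise or counterclockwise). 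Then the physical direction taken by $R_j$ in phase $i$ is $\ell_j \oplus b^{(j)}_i$. Because the ring admits only two physical directions, the conclusion ``some robot moves opposite to the other two in phase $i$'' is exactly the statement that the three bits $\ell_1 \oplus b^{(1)}_i$, $\ell_2 \oplus b^{(2)}_i$, $\ell_3 \oplus b^{(3)}_i$ are not all equal (with three robots and two directions, not-all-equal is precisely a $2$--$1$ split).

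Next I would argue by contradiction: suppose that for every $i \in [0,k-1]$ these three direction bits coincide. Comparing any two robots $R_j$ and $R_{j'}$, coincidence in phase $i$ means $b^{(j)}_i \oplus b^{(j')}_i = \ell_j \oplus \ell_{j'}$. The right-hand side is a single bit independent of $i$, so if coincidence holds in all phases then $b^{(j)}_i \oplus b^{(j')}_i$ takes the same value for every position $i$. Hence either that constant is $0$, giving $b^{(j)} = b^{(j')}$, or it is $1$, giving $b^{(j)} = \overline{b^{(j')}}$ (bitwise complement). Distinctness of IDs forbids the first option, so under the contradiction hypothesis every pair of IDs must be bitwise complementary.

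Finally I would close the argument by observing that three distinct strings cannot be pairwise bitwise-complementary: from $b^{(1)} = \overline{b^{(2)}}$ and $b^{(1)} = \overline{b^{(3)}}$ one gets $b^{(2)} = b^{(3)}$, contradicting distinctness. This contradiction shows the bad case is impossible, so there is some phase $i \in [0,k-1]$ in which the three direction bits are not all equal, i.e.\ at least one robot moves opposite to the other two, as claimed.

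The step I expect to be the main obstacle — or at least the one most easily mishandled — is correctly accounting for the lack of common chirality. A naive proof would simply note that distinct IDs differ in some bit position, but this ignores that the relevant quantity is the \emph{physical} direction, which depends on each robot's private orientation. The key device is folding each robot's chirality into a single bit $\ell_j$ and recognizing that ``same physical direction throughout all $k$ phases'' forces the pairwise XOR of two IDs to be a positional constant; once this reformulation is in place, the impossibility of a pairwise-complementary triple of distinct strings finishes the proof cleanly.
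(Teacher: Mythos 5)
Your proof is correct and follows essentially the same route as the paper's: the paper also identifies the dangerous case as two robots with opposite chirality and bitwise-complementary IDs, and resolves it by noting the third distinct ID must break the pattern in some phase. Your XOR encoding $\ell_j \oplus b^{(j)}_i$ merely unifies the paper's two chirality cases into one clean argument, and if anything is the more rigorous rendering of the same idea.
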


\begin{proof}

Recall that in the $i^{th}$ phase a robot moves in some direction for $2^i$ rounds. It then changes its direction of movement in the next phase iff the next bit in its ID is different. Consider the scenario where all robots move in the same direction starting from the first phase (otherwise the lemma is trivially true). There are two scenarios to consider. Either all robots share the same chirality or they do not.

Consider the scenario where all robots share the same chirality. Since the IDs of the robots are different, at least one bit in the ID of each pair of the robots are different. Hence there will be at least two phases in between $0$ to $k-1$ when one of the robots moves in a direction opposite to the direction followed by the other two robots.

Consider the scenario where robots do not share the same chirality. Since chiralities can be different, two robots with different chiralities can have IDs that are complementary (e.g., 000 and 111) and thus move in the same direction in all phases. However, since all IDs are different, the third robots ID will be such that there will exist at least one phase where one of the robots moves in a different direction from the other two.
\end{proof}

\begin{lemma}
\label{lem:lruntime}
Stage~1 of Algorithm~\ref{alg:explore3} finishes in at most $n + n\cdot 2^k$ rounds, where $k$ is the length of the ID bit-string of the robots. 
\end{lemma}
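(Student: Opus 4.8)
The plan is to combine the periodicity of the movement pattern established in Lemma~\ref{lem:phase} with the geometric growth of the phase lengths, and then invoke the meeting guarantee of Lemma~\ref{lem:meeting}. Recall that in phase $i$ every robot moves for $2^i$ rounds in a direction that depends only on the bit $b_{i \bmod k}$ of its ID; hence the entire configuration of the three movement directions in phase $i$ depends only on $i \bmod k$. By Lemma~\ref{lem:phase} there is an index $i^* \in [0, k-1]$ for which the directions are not all aligned, i.e. at least one robot moves opposite to the other two. Because the pattern is periodic with period $k$, this same opposite-direction configuration recurs in every phase $i$ with $i \equiv i^* \pmod k$.

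First I would pin down the first phase that is simultaneously ``correctly patterned'' and ``long enough.'' Let $j$ be the smallest index with $j \equiv i^* \pmod k$ and $2^j \geq n-2$. Within phase $j$ the three directions never change and, by the previous paragraph, two of the robots move in opposite directions throughout; therefore Lemma~\ref{lem:meeting}, applied from the start of phase $j$ to whatever positions the robots occupy then, guarantees that two robots meet within $n-2$ rounds of the start of phase $j$. Since phase $j$ lasts $2^j \geq n-2$ rounds, this meeting occurs no later than the $(n-2)$-th round of phase $j$, so Stage~1 terminates by then (it may of course end earlier, but we only need the upper bound).

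Next I would tally the rounds. The phases preceding $j$ contribute $\sum_{i=0}^{j-1} 2^i = 2^j - 1$ rounds, and phase $j$ contributes at most $n-2$ further rounds, for a total of at most $2^j + n - 3$. It remains to bound $2^j$. If $2^{i^*} \geq n-2$ then $j = i^* \leq k-1$, so $2^j \leq 2^{k-1} < n \cdot 2^k$. Otherwise $j = i^* + tk$ for the smallest $t \geq 1$ with $2^{j} \geq n-2$; minimality gives $2^{j-k} < n-2$, hence $2^j = 2^k \cdot 2^{j-k} < (n-2)\,2^k < n \cdot 2^k$. In either case $2^j < n \cdot 2^k$, so the total number of rounds is at most $2^j + n - 3 < n \cdot 2^k + n$, which yields the claimed bound $n + n \cdot 2^k$.

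The main obstacle is the middle step: a single occurrence of the ``good'' phase $i^*$ need not last long enough ($2^{i^*}$ may be far smaller than $n$) for Lemma~\ref{lem:meeting} to force a meeting, so one cannot simply use the first good phase. The key idea that resolves this is that the good configuration recurs every $k$ phases while the phase lengths double each time, so some later recurrence is guaranteed to exceed $n-2$ rounds; the delicate part is arguing that this recurrence is reached ``just in time,'' i.e. that the first long-enough good phase has $2^j$ only a factor $2^k$ larger than $n$, which is exactly what the minimality argument above delivers and what keeps the total linear in $n$.
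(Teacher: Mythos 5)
Your proof is correct and follows essentially the same route as the paper: use Lemma~\ref{lem:phase} to find a good phase $i^*$, exploit the period-$k$ recurrence of the direction pattern together with the doubling phase lengths to reach the first good phase of length at least $n-2$, apply Lemma~\ref{lem:meeting} inside that phase, and bound the geometric sum by $n + n\cdot 2^k$. If anything, your write-up is slightly more careful than the paper's (you state the period as $k$ rather than the paper's $k-1$, and make the minimality argument bounding $2^j$ explicit), but the underlying argument is the same.
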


\begin{proof}
We show that there exists a phase $i \in [0, j(k-1)]$ when at least two robots meet in Stage~1, where $k$ is the length of the IDs of the robots and $j$ is some positive integer. It follows from Lemma~\ref{lem:phase} that there exists a phase $i \in [0, k-1]$ when at least two robots move in the opposite directions to each other. The number of rounds in that phase is $2^i$. If $2^i \geq n-2$ then it follows from Lemma~\ref{lem:meeting} that any two robots meet. However, it might be the case that $2^i < n-2$ for that $i$ in $[0, k-1]$. 
Then according to our algorithm (see Stage~1), these two robots (again) move in the opposite directions in each of the phases $j(k-1)+i$ for $j =1 , 2, \dots$, and hence Stage~1 finishes when $2^{j(k-1)} \leq n-3$ and $2^{j(k-1) + i} \geq n-2$ for some positive integer $j$. Thus, Stage~1 takes at most $\sum_{t = 0}^{j(k-1) + i} 2^t$ rounds. The sum is bounded above by $(n + n\cdot 2^i$), since $2^{j(k-1)} \leq n-3$. Therefore, Stage 1 of Algorithm~\ref{alg:explore3} finishes in at most $n + n\cdot 2^k$ rounds, since $i < k$.      
\end{proof}
%

\begin{algorithm}[H]
 \footnotesize
 \caption{{\sc Explore-Dynamic-Ring-3-Robots}}
 \label{alg:explore3}
 \begin{algorithmic}[1]
 \State $i$ := $0$
 \While{($1$)}
    \If{$b_{i \bmod k}$ $=$ $0$}
        \State {\sc Explore}($left$, i)
    \ElsIf{$b_{i \bmod k}$ $=$ $1$}
       \State {\sc Explore}($right$, i)
    \EndIf
    \State $i$ := $i+1$
 \EndWhile
 \end{algorithmic}
\end{algorithm}

\begin{algorithm}[H]
 \footnotesize
 \floatname{algorithm}{Procedure}
 \caption{{\sc Explore}($dir$, i)}
 \label{alg:fexplore}
 \begin{algorithmic}[1]
    \For{$tstep = 0$ \textbf{to} ($2^i - 1$)}
        \If{{\em meetSmall}}             
            \State {\sc BeRobotAB($A$)}   
        \ElsIf{{\em meetLarge}}          
            \State {\sc BeRobotAB($B$)}
        \ElsIf{{\em meetMark}}           
            \State {\sc BeRobotC()}
        \EndIf
        \State {\sc Move($dir$)}
    \EndFor 
 \end{algorithmic}
\end{algorithm}
%

\begin{lemma}
\label{lem:stage2}
Stage~2 of Algorithm~\ref{alg:explore3} finishes in at most $n$ rounds. 
\end{lemma}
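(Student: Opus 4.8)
The plan is to analyze Stage~2 through a single potential function: the hop-length of the arc between $A$ and $B$ that contains the third robot $C$. At the start of Stage~2 the robots $A$ and $B$ sit on the same node $s$ (or on two adjacent nodes, if they met by edge crossing in Stage~1), and from then on they move in fixed, opposite directions; meanwhile $C$ keeps executing its Stage~1 movement and may advance arbitrarily. Let $d_A$ and $d_B$ be the numbers of steps $A$ and $B$ have respectively taken since Stage~2 began. So long as no meeting with $C$ has yet occurred, $C$ lies on the ``far'' arc joining $A$ to $B$ (the arc not containing $s$), and the length of this far arc is exactly $n-(d_A+d_B)$. Writing $a$ for the distance from $A$ to $C$ and $b$ for the distance from $C$ to $B$ along this arc gives the key identity $a+b = n-(d_A+d_B)$, whose value starts at $n$ (or $n-1$ in the edge-crossing case). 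This is a refinement, in spirit, of Lemma~\ref{lem:meeting}.

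First I would observe that $C$'s own movement can never change $a+b$: moving $C$ within the far arc merely trades length between $a$ and $b$, and $C$ cannot leave the far arc without passing through $A$ or $B$, which is already a meeting. Hence $a+b$ decreases only when $A$ or $B$ advances toward $C$. Second, I would show that at least one of $A$, $B$ makes such forward progress every round. Before any meeting we have $a\ge 1$ and $b\ge 1$, so $a+b\ge 2$; consequently the forward edge of $A$ (the first edge of the far arc on $A$'s side) and the forward edge of $B$ are two distinct edges. Since the adversary deletes at most one edge per round and neither robot ever voluntarily halts, at least one of these two moves succeeds, so $a+b$ drops by at least one each round.

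Combining these two facts, $a+b$ starts at most $n$ and strictly decreases every round until a meeting with $C$ occurs, yielding the bound of at most $n$ rounds. To finish cleanly I would check the terminal round: once $a+b=2$ (so $a=b=1$ and $C$ is the unique interior node of the far arc), every admissible adversary choice still forces $A$ or $B$ onto $C$'s node, or an edge crossing with $C$, within that round, so the stage genuinely ends rather than stalls.

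I expect the main obstacle to be the bookkeeping of boundary cases rather than the counting argument itself: (i) confirming the two forward edges are always distinct while $a+b\ge 2$; (ii) ruling out that $A$ and $B$ meet each other again before either meets $C$, which is impossible since $C$ blocks the far arc, so any re-meeting of $A$ and $B$ would coincide with both reaching $C$; and (iii) uniformly handling meetings realized by edge crossing as well as by co-location, together with the slightly smaller initial arc length ($n-1$) when Stage~1 ended via a crossing. Each of these is routine once the identity $a+b=n-(d_A+d_B)$ is in place, and none of them weakens the final $n$-round bound.
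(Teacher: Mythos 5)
Your proof is correct and rests on the same core fact as the paper's: two robots moving in fixed opposite directions present the adversary with two distinct edges to block, so at least one advances each round and the arc separating them from $C$ shrinks by at least one per round. The paper's own proof is a one-line appeal to the induction in Lemma~\ref{lem:meeting}, whereas you make the argument self-contained via the monotone potential $a+b=n-(d_A+d_B)$; this is a repackaging rather than a different route, and your boundary-case checks (distinct forward edges while $a+b\ge 2$, the terminal round, $C$ unable to leave the far arc) are all sound and yield the bound of at most $n-1\le n$ rounds.
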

\begin{proof}
Stage~2 finishes when any of the two robots $A$ and $B$ meets $C$ after their ($A$ and $B$) first meeting in Stage~1. Note that $A$ and $B$ start moving in the opposite directions from after Stage~1. Thus it follows from the proof of Lemma~\ref{lem:meeting} that one of them meets $C$ in at most $n-1$ rounds.      
\end{proof}

\begin{lemma}\label{lem:stage3}
Stage~3 of Algorithm~\ref{alg:explore3} finishes in at most $4n$ rounds.
\end{lemma}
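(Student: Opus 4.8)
The plan is to split Stage~3 into two phases and bound each separately. The first phase runs until robots $A$ and $B$ either meet (Case~I of the algorithm) or become adjacent on a pair of nodes whose connecting edge the adversary then refuses to restore, which is the "stuck'' configuration underlying Case~II. The second phase runs from that moment until $C$ certifies $AtoB = BtoA$. Summing the two bounds should yield $4n$.

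For the first phase I would track the hop-distance $g$ between $A$ and $B$ along the arc on which they approach each other, namely the arc not containing their Stage~1 meeting point. In the spirit of Lemma~\ref{lem:meeting}, since $A$ and $B$ move in fixed, opposite directions and the adversary deletes at most one edge per round, whenever $g \geq 2$ the forward edges of $A$ and $B$ are distinct, so at least one robot advances and $g$ strictly decreases. Hence $g$ reaches $1$ within at most $n-1$ rounds. Once $g = 1$, the adversary must either permit the traversal of the shared edge, in which case $A$ and $B$ meet and Stage~3 ends immediately (Case~I), or delete that edge every round, freezing $A$ and $B$ in place. Either way the first phase lasts at most $n-1$ rounds, and in Case~I we are already finished within this bound.

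For the second phase, suppose $A$ and $B$ are stuck from some round $\tau \leq n-1$. The key observation is that to keep them frozen the adversary must spend its single edge deletion on the $A$--$B$ edge in every round; consequently $C$ never has an edge removed under it and advances one hop per round unobstructed. With $A$ and $B$ fixed on adjacent nodes whose edge is gone, the ring becomes a path of length $n-1$ whose endpoints are exactly $A$ and $B$, so each full traversal by $C$ from one robot to the other takes exactly $n-1$ rounds and records the value $n-1$. I would then argue that $C$ needs at most three further legs: the leg in progress at round $\tau$ may record a corrupted value, because $A$ or $B$ was still moving while $C$ counted, but it completes within $n-1$ rounds since $C$ closes on a now-stationary target while moving freely; the next two full legs both record $n-1$, and comparing two consecutive legs yields equality and triggers detection. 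This costs at most $3(n-1)$ rounds, giving a total of at most $(n-1) + 3(n-1) = 4n-4 \leq 4n$.

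I expect the main obstacle to be the bookkeeping at the transition round $\tau$: one must argue cleanly that the possibly corrupted in-progress leg is bounded and that exactly two subsequent clean legs suffice for the equality test, rather than getting entangled in the precise comparison timing of the zig-zag. The second delicate point is the single-edge budget argument itself, i.e. establishing that the adversary cannot simultaneously freeze $A$ and $B$ and stall $C$; this is precisely what prevents the second phase from being stretched arbitrarily, and it is the crux that makes each of $C$'s clean legs cost exactly $n-1$ rounds.
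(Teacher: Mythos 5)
Your proposal is correct and follows essentially the same route as the paper's proof: bound the time for $A$ and $B$ to either meet or become frozen at the endpoints of the deleted edge (about $n$ rounds, via the one-edge-per-round budget as in Lemma~\ref{lem:meeting}), then charge at most three unobstructed traversals of the resulting length-$(n-1)$ path to $C$'s zig-zag before $AtoB = BtoA$ is observed. Your bookkeeping of the possibly corrupted in-progress leg at the transition round is in fact slightly more explicit than the paper's, which simply allots $n$ rounds for $C$ to reach one of the stuck robots before its two clean legs.
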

\begin{proof}
Stage~3 finishes when either (I) $A$ and $B$ meet again which ensures that exploration is completed, or (II) robot $C$ detects the completion of exploration from the step counts of its zig-zag movement, i.e., when $AtoB = BtoA = n-1$. Generally, the adversary can block the movement of one robot, since it can delete at most one edge. It can block the movement of two robots only when they are on the adjacent nodes of an edge and want to move through that edge, but the adversary keeps the edge deleted. Suppose these robots $A$ and $B$ could not meet in $n$ rounds (after the end of Stage~2). Then they must be at the adjacent nodes of the deleted edge, which follows from Lemma~\ref{lem:meeting}. Subsequently, robot $C$ will eventually find one of the robots, say $A$, in at most $n$ rounds. $C$ will then perform one iteration of the zig-zag movement and find $B$ in $n-1$ rounds (and set its counter $AtoB$ to $n-1$). In the next $n-1$ rounds, $C$ moves back to $A$ and sets $BtoA$ to $n-1$ as well. Thus, when $AtoB = BtoA = n-1$, $C$ concludes that exploration of the ring is complete and can inform $A$ of the same. Thus Stage~3 finishes in at most $4n$ rounds.
\end{proof}

\begin{lemma}\label{lem:stage4}
Stage~4 of Algorithm~\ref{alg:explore3} finishes in at most $n$ rounds.
\end{lemma}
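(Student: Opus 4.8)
The plan is to reduce Stage~4 to a single clean claim and then prove that claim with a potential-function argument modeled on the proof of Lemma~\ref{lem:meeting}. At the start of Stage~4 we are in one of the two situations described in Lemma~\ref{lem:correctness}: either $A$ and $B$ met again (Case~I), or $C$ detected $AtoB = BtoA$ while co-located with one of $A$, $B$ (Case~II). In both situations the two robots that detected completion are co-located at a common node, both already know $n$, and they start moving in opposite directions around the ring; the third robot sits on some other node and follows its own movement rule. Since the two cases are indistinguishable from here on, I would write $P_1, P_2$ for the two detecting robots and $P_3$ for the third, and it suffices to show: \emph{if $P_1$ and $P_2$ start co-located and move in globally opposite directions, then one of them meets $P_3$ (at a node or via edge crossing) within at most $n$ rounds.} Once that meeting occurs, $P_3$ is informed and terminates together with the robot it met, while the remaining detector terminates after its $n$ committed rounds, so the whole stage closes within $n$ rounds.

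To prove the claim I would set up a potential. Orient the ring clockwise and assume, without loss of generality, that $P_1$ moves clockwise and $P_2$ counterclockwise. Let $f$ denote the number of hops from $P_1$ to $P_3$ measured clockwise and $g$ the number of hops from $P_2$ to $P_3$ measured counterclockwise, so that $f+g$ is exactly the length of the arc from $P_1$ to $P_2$ that passes through $P_3$. Because $P_1$ and $P_2$ begin on the same node and $P_3$ is on a distinct node, initially $f+g = n$ with $f, g \ge 1$. Two observations drive the proof. First, any move of $P_3$ raises one of $f, g$ by $1$ and lowers the other by $1$, so $P_3$'s motion leaves $f+g$ unchanged. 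Second, as long as $f, g \ge 1$ the node occupied by $P_3$ lies strictly between $P_1$ and $P_2$ on this arc, so the clockwise-incident edge that $P_1$ wants and the counterclockwise-incident edge that $P_2$ wants are two distinct edges; since the adversary deletes at most one edge per round, at least one of $P_1, P_2$ advances toward $P_3$ and reduces its own gap by $1$.

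Combining the two observations, $f+g$ strictly decreases by at least $1$ in every round until a meeting occurs. Starting from $f+g = n$, after at most $n-1$ rounds we have $f+g \le 1$, which forces $f = 0$ or $g = 0$; that is, $P_1$ or $P_2$ becomes co-located with $P_3$. (Any edge crossing encountered along the way only produces a meeting earlier, so it does not hurt the bound.) Hence a meeting of the pair with $P_3$ occurs within $n-1$ rounds, and together with the two robots' commitment to stop after $n$ rounds this gives the claimed bound of at most $n$ rounds for Stage~4.

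The main obstacle, and the reason a naive appeal to Lemma~\ref{lem:meeting} is not enough, is ruling out the wrong meeting: a priori $P_1$ and $P_2$ could circle the ring and meet each other again without ever contacting $P_3$, and $P_3$ could try to flee whichever pursuer is nearest. The potential $f+g$ is exactly what defeats both threats. Its invariance under $P_3$'s motion makes fleeing futile, since moving away from one pursuer moves toward the other; and the fact that $P_3$ always separates $P_1$ and $P_2$ on the shrinking arc guarantees the pair cannot re-meet before one of them reaches $P_3$. The one point I would be careful to justify is precisely this separation, as it is what certifies that $P_1$ and $P_2$ never contend for the same (possibly blocked) edge while $f, g \ge 1$, and hence that the per-round decrease holds against a worst-case adversary.
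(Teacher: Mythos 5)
Your proof is correct, but it takes a genuinely different and more self-contained route than the paper. The paper's proof of Lemma~\ref{lem:stage4} is a one-line appeal to Lemma~\ref{lem:meeting}: the two detecting robots move in opposite directions, hence ``at least one of them meets the 3rd robot in $n-2$ rounds.'' Strictly read, Lemma~\ref{lem:meeting} only guarantees that \emph{some} two of the three robots meet, and its proof dispatches the case of two co-located robots ``trivially''---which is exactly the configuration at the start of Stage~4---so it does not by itself rule out the wrong meeting you identify. Your potential $f+g$, invariant under $P_3$'s motion and strictly decreasing because the two detectors contend for distinct edges whenever $f,g\ge 1$, proves precisely the statement the paper needs (one of the two informed robots reaches the third within $n-1$ rounds) without relying on Lemma~\ref{lem:meeting} at all, and it has the added benefit of being agnostic to what the third robot does (it may still be unmarked and oscillating from Stage~1, or be $C$ mid-zig-zag). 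The paper's approach buys brevity and reuse of an existing lemma; yours buys an argument that actually closes the case distinction. One small imprecision: in Case~I the detecting robots may have met via edge crossing rather than at a common node, so they start on adjacent nodes moving apart; your invariant still applies verbatim with initial potential $f+g=n-1$ instead of $n$, which only tightens the bound.
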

\begin{proof}
In Stage~4, two robots, which have detected the completion of exploration, move for at most $n$ rounds and terminate. Since these two robots move in opposite directions, it follows from Lemma~\ref{lem:meeting} that at least one of them meets the 3rd robot in $n-2$ rounds and they both terminate. Hence Stage~4 takes $n$ rounds. 
\end{proof}

\begin{algorithm}[H]
\footnotesize
\floatname{algorithm}{Procedure}
\caption{{\sc BeRobotAB($recMark$)}}
\label{alg:makeA}
\begin{algorithmic}[1]
\State step := $0$
\If{$mark = NIL$}
    \State $mark := recMark$
\EndIf 
\While{($1$)}
    \If{$mark$ = $A$}
       \State {\sc Move($left$)}
    \Else   
       \State {\sc Move($right$)}
    \EndIf   
    \If{move successful}
       \State step := step+1
    \EndIf
    \If{{\em meetB} \textbf{or} {\em meetA}}  
       \If{$mark$ = $A$}
           \State recStep := {\sc getSetps($B$)}
           \State size := step + recStep + $1$
           \State {\sc startTermination}($A$, $left$, size)
       \Else
           \State recStep := {\sc getSetps($A$)}
           \State size := step + recStep + $1$
           \State {\sc startTermination}($B$, $right$, size)
       \EndIf
    \ElsIf{{\em meetC} \textbf{and} {\sc recTerminate()} = 1}    
       \State size := {\sc getSteps($C$)} + $1$
       \If{$mark$ = $A$}
          \State {\sc startTermination}($A$, $left$, size)
       \ElsIf{$mark$ = $B$}
          \State {\sc startTermination}($B$, $right$, size)
       \EndIf      
    \ElsIf{{\em meetTer}}                 
       \State {\em terminate}
    \EndIf
\EndWhile
\end{algorithmic}
\end{algorithm}
Now we state the main result of this section.  

\begin{theorem}\label{thm:main-3robots}
Algorithm \ref{alg:explore3} correctly explores a 1-interval connected dynamic (anonymous) ring of size $n$ in $O(n + n\cdot 2^k)$ rounds with $3$ robots such that each robot has unique ID of length $k$ bits and the robots have no knowledge of $n$ and no common chirality.
\end{theorem}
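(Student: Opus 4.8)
The plan is to obtain the theorem as a direct composition of the correctness lemma and the four per-stage time bounds, since all the substantive work has already been localized to the individual stages. For correctness, I would simply invoke Lemma~\ref{lem:correctness}, which already establishes that Algorithm~\ref{alg:explore3} explores the entire ring and that all three robots reach explicit termination. The remaining task is then purely quantitative: to bound the total running time by summing the durations of Stages~1 through~4.

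First I would argue that the four stages execute in strict temporal succession, so that their round counts simply add. Stage~1 runs until the first meeting of two robots; at that moment they are named $A$ and $B$ and Stage~2 begins, during which $A$ and $B$ move in opposite directions from their meeting node. Stage~3 begins only once one of $A, B$ meets the third robot $C$, and Stage~4 begins only after exploration completion has been detected by two robots. The point I would verify here is that each cited lemma's precondition is exactly what the previous stage leaves behind: in particular, the bound in Lemma~\ref{lem:stage2} relies on $A$ and $B$ moving in opposite directions, which holds by construction at the end of Stage~1, while Lemma~\ref{lem:phase} together with Lemma~\ref{lem:meeting} guarantees that Stage~1 does terminate, so that the later stages are actually reached.

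With the succession established, I would add the bounds: Stage~1 takes at most $n + n\cdot 2^k$ rounds (Lemma~\ref{lem:lruntime}), Stage~2 at most $n$ rounds (Lemma~\ref{lem:stage2}), Stage~3 at most $4n$ rounds (Lemma~\ref{lem:stage3}), and Stage~4 at most $n$ rounds (Lemma~\ref{lem:stage4}). Summing gives at most $n + n\cdot 2^k + n + 4n + n = 7n + n\cdot 2^k$ rounds, which is $O(n + n\cdot 2^k)$. I would then remark that since the IDs have length $k = O(1)$ by the model assumption, $2^k = O(1)$, so the total running time is in fact $O(n)$; this matches the trivial $\Omega(n)$ lower bound, since any exploration must visit all $n$ nodes, and hence the algorithm is asymptotically optimal.

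The proof presents no genuine mathematical obstacle, as all the difficulty has been absorbed into the individual lemmas. The only thing requiring care is the bookkeeping of the stage transitions, namely confirming that the event ending one stage is precisely the event whose \emph{absence} the next stage's lemma implicitly assumes, so that rounds are neither double-counted nor omitted and the preconditions of each cited lemma genuinely hold at the instant that stage begins.
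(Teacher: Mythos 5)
Your proposal is correct and follows essentially the same route as the paper's own proof: invoke Lemma~\ref{lem:correctness} for correctness and sum the stage bounds from Lemmas~\ref{lem:lruntime}, \ref{lem:stage2}, \ref{lem:stage3}, and \ref{lem:stage4} to obtain $7n + n\cdot 2^k$ rounds. Your added remarks on verifying the stage-transition preconditions and on the $k=O(1)$ consequence (which the paper states separately as Corollary~\ref{cor:main-result}) are sound but not a departure from the paper's argument.
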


\begin{proof}
The correctness of the algorithm follows from Lemma~\ref{lem:correctness}. 

The running time of the algorithm follows from the time complexity analysis of the four stages in Lemmas~\ref{lem:lruntime},~\ref{lem:stage2},~\ref{lem:stage3} and~\ref{lem:stage4}. Thus by summing up the individual runtimes, we get the time complexity as
$(n + n\cdot 2^k) + n + 4n + n = 7n + n\cdot 2^k$. 


 Hence, Algorithm~\ref{alg:explore3} explores a dynamic ring of size $n$ with three robots in $(7n + n\cdot 2^k)$ rounds, where each robot has a unique ID of length $k$ bits. 
\end{proof}

\begin{algorithm}[H]
\footnotesize
 \floatname{algorithm}{Procedure}
\caption{{\sc BeRobotC()}}
\label{alg:makeC}
\begin{algorithmic}[1]
\State $AtoB$ := $0$
\State $BtoA$ := $0$
\State $recMark$ := {\sc getMark()}
\If{$mark = NIL$}
    \State $mark := C$
\EndIf
\If{$recMark$ = $A$}
    \State $dir$ := $right$
\ElsIf{$recMark$ = $B$}
    \State $dir$ := $left$
\EndIf
\While{($1$)}
    \State {\sc Move($dir$)}
    \If{move successful}
       \State step := step+1
    \EndIf
    \If{{\em meetB}}
       \State AtoB := step
       \If{$AtoB \neq BtoA$}
          \State step := 0
       \Else                                   
          \State size := step + $1$
          \State {\sc askTerminate()}
          \State {\sc startTermination}($C$, $left$, size)
       \EndIf
    \ElsIf{{\em meetA}}
       \State BtoA := step
       \If{$AtoB \neq BtoA$}
          \State step := 0
       \Else                                 
          \State size := step + $1$
          \State {\sc askTerminate()}
          \State {\sc startTermination}($C$, $right$, size)
       \EndIf
    \ElsIf{{\em meetTer}}                     
       \State {\em terminate}
    \EndIf   
\EndWhile
\end{algorithmic}
\end{algorithm}
\begin{algorithm}
\footnotesize
\floatname{algorithm}{Procedure}
\caption{{\sc startTermination}($mark$, $dir$, size)}
\label{alg:terminate}
\begin{algorithmic}[1]
\State Ttime := size
\While{(1)}
    \State {\sc Move($dir$)}
    \State Ttime := Ttime-1
    \If{Ttime = $0$}                             
        \State {\em terminate}
    \ElsIf{({\em meetA} \textbf{or} {\em meetB} \textbf{or} {\em meetC})}   
        \State {\em terminate}
    \EndIf
\EndWhile
\end{algorithmic}
\end{algorithm}

\begin{corollary}\label{cor:main-result}
There exists an algorithm which explores a 1-interval connected dynamic (anonymous) ring of size $n$ in $O(n)$ rounds with $3$ robots having unique IDs of length $O(1)$ bits and without the knowledge of $n$ and without common chirality.
\end{corollary}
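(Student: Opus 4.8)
The plan is to derive this corollary as an immediate specialization of Theorem~\ref{thm:main-3robots}, which already establishes that Algorithm~\ref{alg:explore3} correctly explores the ring with explicit termination in $O(n + n\cdot 2^k)$ rounds, where $k$ is the common length of the robots' ID bit-strings. The only additional observation I need is that the number of robots is fixed at three, so the IDs can be chosen to have constant length. Concretely, I would note that $k = \lceil \log_2 3 \rceil = 2$ bits already suffice to assign three pairwise-distinct IDs (e.g., $00$, $01$, $10$) of equal length, which is exactly the regime $k = O(1)$ assumed in the model section.

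The key substitution is then purely arithmetic: with $k = O(1)$ we have $2^k = O(1)$, and hence the second term $n\cdot 2^k = O(n)$. Substituting into the bound of Theorem~\ref{thm:main-3robots} gives total running time $O(n + n\cdot 2^k) = O(n)$. I would emphasize that correctness is inherited verbatim from Lemma~\ref{lem:correctness} (and thus Theorem~\ref{thm:main-3robots}), since that argument depends only on the robots having \emph{distinct} IDs of equal length and not on the magnitude of $k$; shortening the IDs to constant length therefore affects only the Stage~1 bound of Lemma~\ref{lem:lruntime} and leaves the remaining stages (Lemmas~\ref{lem:stage2}--\ref{lem:stage4}, each already $O(n)$) unchanged.

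Since there is essentially no new mathematical content, the only point requiring care is making explicit why the constant-length IDs are legitimate inputs to the algorithm, namely that the symmetry-breaking argument of Lemma~\ref{lem:phase} still goes through: three distinct $2$-bit strings cannot all be mutually complementary, so even without common chirality there is guaranteed to be a phase in $[0,k-1]$ in which one robot opposes the other two. I do not anticipate a genuine obstacle here; the ``hard part,'' if any, is simply to state cleanly that assigning short IDs preserves the distinctness and equal-length hypotheses on which all the earlier lemmas rest, so that the full machinery of Theorem~\ref{thm:main-3robots} applies unchanged and the running time collapses to $O(n)$.
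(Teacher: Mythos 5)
Your proposal is correct and matches the paper's (implicit) justification: the corollary is stated without proof precisely because the model section already fixes $k=O(1)$ for three robots, so substituting $2^k=O(1)$ into the $O(n+n\cdot 2^k)$ bound of Theorem~\ref{thm:main-3robots} immediately yields $O(n)$. Your extra check that Lemma~\ref{lem:phase} survives with constant-length distinct IDs is a reasonable (and valid) piece of diligence, but it is not something the paper needed to reargue.
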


\subsection{Simulation Results}\label{sec:experiment}
We perform experimental evaluation and highlight the effectiveness  and  efficiency  of our algorithm in dynamic rings for different parameter ranges. In particular, we evaluated the performance of our algorithm by computing the running time for different sizes of the dynamic ring (i.e., number of nodes in the ring) and also for different ID lengths of the robots. In the simulations, we assumed the robots are placed at random nodes in the beginning. Furthermore, each robot decides its initial direction of movement randomly, i.e. each robot decides clockwise (left) or anti-clockwise (right) direction randomly.  


We assume an adversary determines the dynamic ring in each round. In particular, the following four different adversarial strategies are considered for the simulations. 

\begin{itemize}
    \item \textit{Random Edge Deleted (RED):} At each round, the adversary randomly selects an edge in the ring and deletes it. The previously deleted edge gets added to the ring.
    \item \textit{Same Edge Deleted (SED):} The adversary randomly selects an edge in the ring and keeps the edge deleted throughout the execution.
    \item \textit{Random Robot Blocking (RRB):} At each round, the adversary targets a random robot, and block the movement of the robot in that round. This is done by deleting the edge through which the robot decides to move in that particular round. 
    \item \textit{Same Robot Blocking (SRB):} The adversary randomly selects a robot and block the movement of the robot throughout the execution by deleting appropriate edges. That is, the robot is not allowed to move from its initial position. 
\end{itemize}

The robots have no knowledge about the adversarial strategies, but the adversary knows  the robots' current position including the edges through which the robots decide to move. Thus the above adversarial strategies are adaptive. In all the cases, the dynamic ring remains connected throughout the execution.  

\noindent \textbf{Varying the Size of the Ring (Fig.~\ref{plot:ring_size}):}
In this experiment, we consider dynamic ring of five different sizes, i.e., $n = 20000, 40000, 60000, 80000$ and $100000$. The robots have ID of length $3$; in fact, the ID-bits are taken $100$, $101$ and $111$ for the 3 robots. We run the algorithm for $5$ times for each values of $n$ and count the average number of rounds taken to explore the ring. We plot the results in Fig.~\ref{plot:ring_size}, where the x-axis represents the ring size--$n$ and y-axis represents the obtained rounds to explore the ring. Observe that the time or the number of rounds to explore the ring increases when the size of the ring $n$ increases. However, in all the cases, the running time of our algorithm is bounded above by $5n$ for all the different adversarial strategies. The running time is less than $3n$ for the strategies RED and RRB. This shows that the simulation results outperform our theoretically proven time bound-- $7n + n\cdot 2^k$, where $k$ is the ID length of a robot. 

\begin{figure}
    \includegraphics[scale = 0.23]{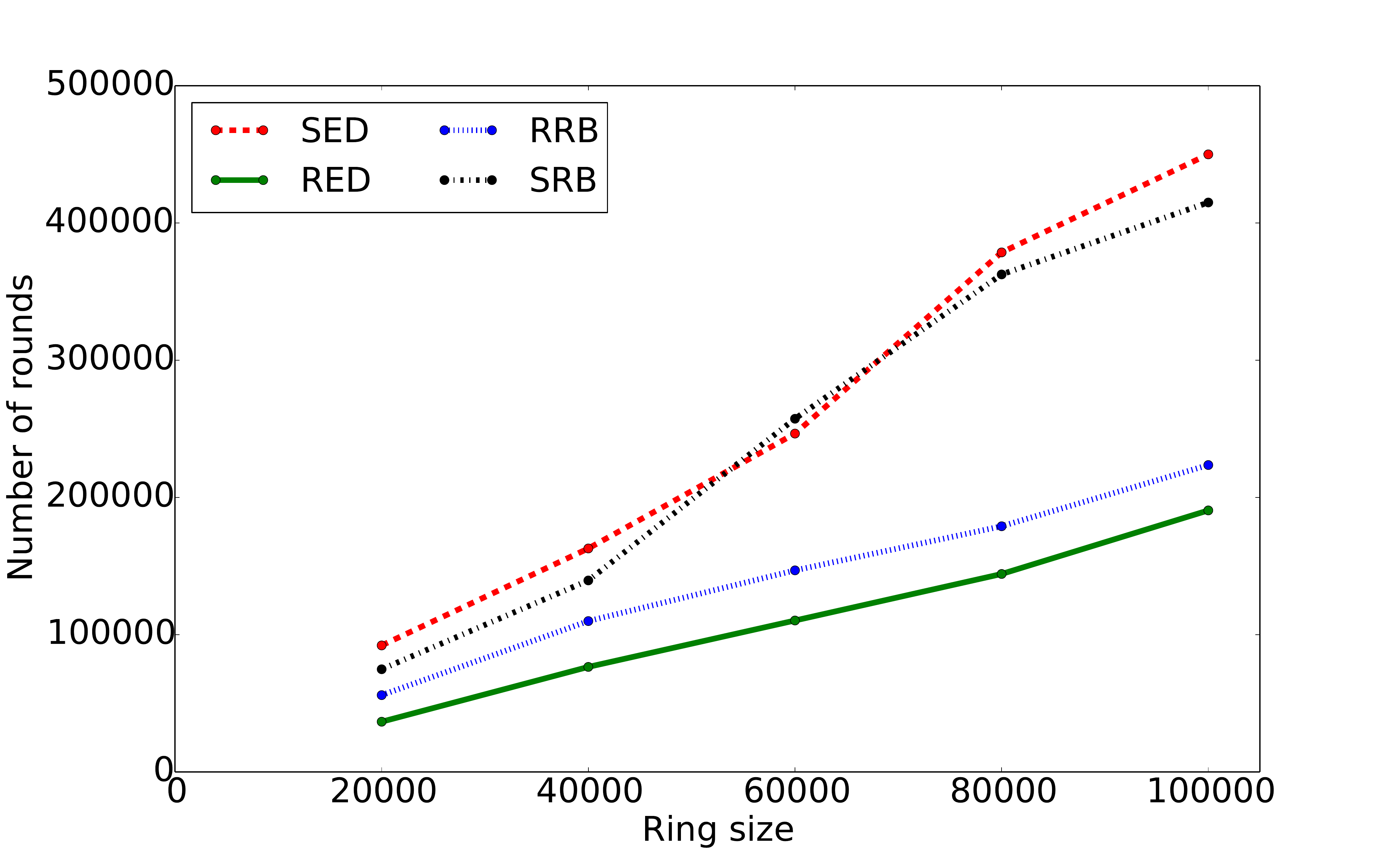}
    \caption{Varying $n$, the size of the ring.}
    \label{plot:ring_size}
\end{figure}

\noindent \textbf{Varying the ID Length of the Robots (Fig.~\ref{plot:ID_len}):}
In this experiment, we consider a ring of size $32768$. The length of the IDs of each robot varies from $3$ to $15$ (notice that $\log n = 15$ as $32768 = 2^{15}$). Particularly, ID length of the robots are considered $3$, $6$, $9$, $12$ and $15$ for the simulations. The ID-bits are generated randomly. In this case also, we run the algorithm for $5$ times on each different ID lengths and count the average number of rounds taken to explore the ring. Again we perform the simulations for the four adversarial strategies. We plot this simulation results in Fig.~\ref{plot:ID_len}, where the x-axis represents the ID length and y-axis represents the rounds taken by the algorithm to explore the ring. Observe that in all the cases, the running time of the algorithm is bounded above by $5n$, where $n$ is the ring size. In fact, the running time is less than $3n$ for the strategies RED and RRB. 
In this case also, the simulation results outperform our theoretically proven time bound.

\begin{figure}
    \includegraphics[scale = 0.23]{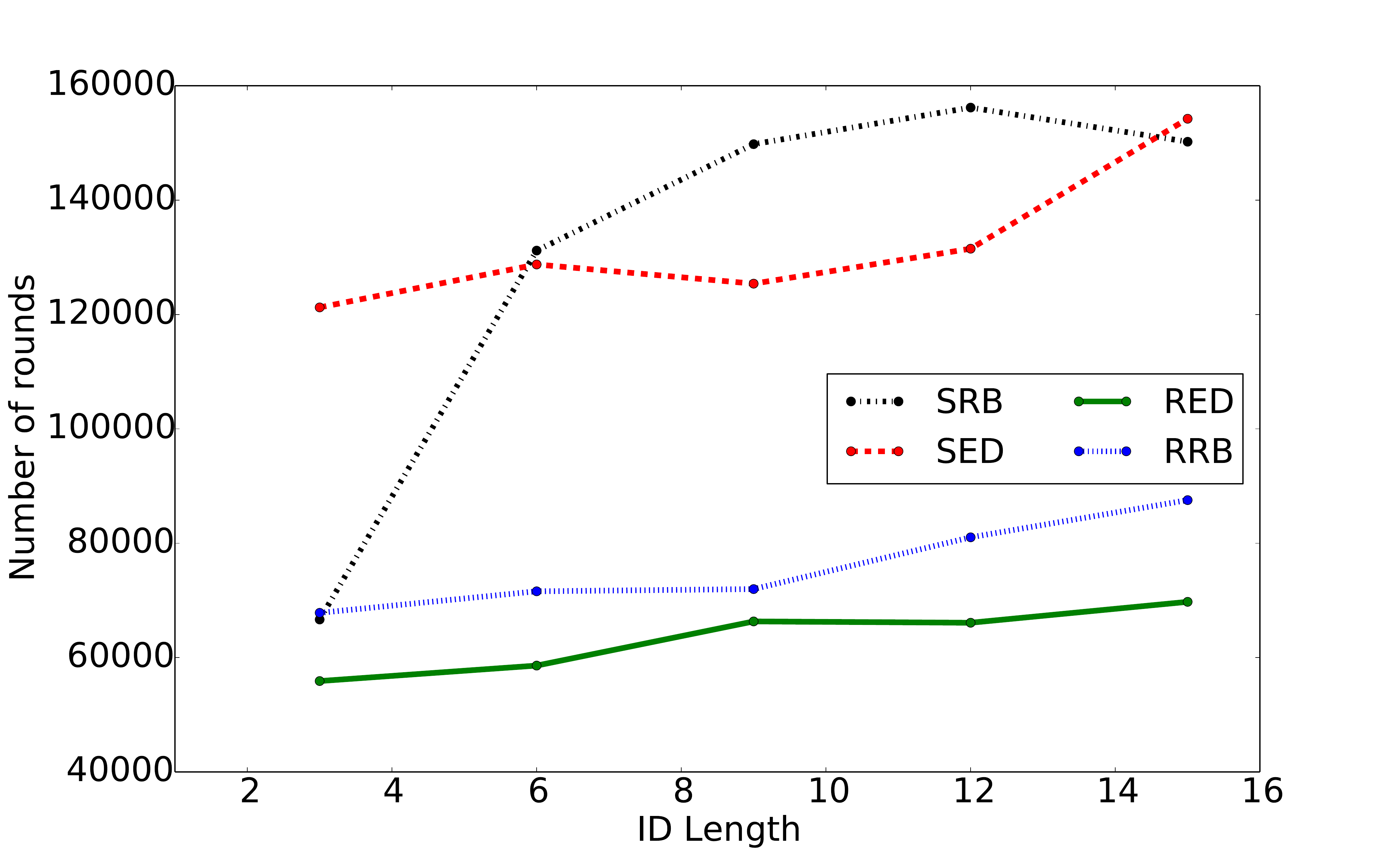}
    \caption{Varying the ID length of the robots.}
    \label{plot:ID_len}
\end{figure}

\sloppy
\section{Exploration with Randomness}
\label{sec:randomness-exploration}

We can remove the need for edge crossing detection through the creative use of randomness, resulting in an algorithm that is both Las Vegas and Monte Carlo in nature. We bound the algorithm's expected running time and success probability in the following section. Subsequently, we show how to further use randomness to remove the requirement of having unique IDs initially assigned to each robot.

\subsection{Removing Edge Crossing Detection}
In this section, we first discuss the changes to the Algorithm~\ref{alg:explore3} that are required to make it work. We then subsequently provide bounds for the running time and correctness.

Notice that edge crossing detection is used when two robots are located at adjacent nodes and must move in opposite directions along the same edge in the same round. A simple way to get the robots to meet is to force one to be stationary while the other moves. This can easily be achieved by having each robot flip a fair coin to decide if it should move or not. If the result of the coin toss is heads, then the robot performs the movement it initially planned to do. If the result is tails, then the robot does not move. Call this subroutine as {\sc Random-Movement}.

{\sc Random-Movement} can be run as a subroutine by every robot in every round, after deciding to move (and where to) but before the actual movement. We now describe how to further modify {\sc Explore-Dynamic-Ring-3-Robots} (Algorithm~\ref{alg:explore3}), in addition to using the subroutine, so that we may remove the need for edge crossing detection.

In Stage~1, notice that if the algorithm required a robot to move for $s$ steps in $r$ rounds, then directly using {\sc Random-Movement} in the algorithm may cause that robot to move for less than $s$ steps in $r$ rounds, even when the adversary does not block any movement of the robot. This is slightly problematic as we require that there exists a phase in which robots can move for at least $n$ steps, and we do not want to drastically increase the running time. An easy fix is to extend the number of rounds of each phase $i$ by a constant factor, say $8$, in order to ensure that on expectation and with high probability, the number of steps a robot moves through is at least $2^i$ when $2^i \geq n$.\footnote{The expectation is easy to see. The high probability bound can be seen by applying a simple Chernoff bound.} 

For Stage~2 to occur, we need two robots to come into contact with each other and be marked $A$ and $B$. Since the probability that two robots that were supposed to move through the same edge meet instead is $1/2$, on expectation, at least one such meeting occurs within $2$ phases of the first phase $i$ where $2^i \geq n$. Thus, on expectation after $O(n)$ rounds are complete, Stage~1 is over and Stage~2 begins.

Now, in Stage~2, we require either $A$ or $B$ to come into contact with the third robot. Notice that this third robot is not constrained to only move in one direction, but may move in both directions. Thus, it may only come into contact with either $A$ or $B$ as a result of crossing an edge. Again, through the use of {\sc Random-Movement}, it takes $2$ such attempts on expectation at edge crossing between the third robot and either $A$ or $B$ before contact is made and the third robot gets marked as $C$. Then the algorithm moves to Stage~3. Notice that before the third robot becomes marked, it is possible that $A$ and $B$ meet again, thus sending the algorithm directly into Stage~4 (since $A$ or $B$ may miss to meet $C$ as there is no edge crossing detection).

If the algorithm is in Stage~3, then the third robot has been marked $C$. Now, either $A$ and $B$ meet again or the adversary blocks $A$ and $B$ at adjacent nodes and $C$ moves back and forth between them. In the latter case, it takes $O(n)$ rounds with high probability\footnote{The high probability is a result of the use of {\sc Random-Movement}.} to move to Stage~4 and $C$, and one of $A$ and $B$ will know the exact value of $n$.\footnote{It is possible that $A$ and $B$ may have crossed each other several times before the adversary blocks them at adjacent nodes and the latter case occurs. In this case, $C$ and the robot it finally interacts with will know an upper bound on $n$. Note that if $A$ and $B$ cross each other at least twice, then on expectation they will meet, leading to the former case.} In the former case, after some $cn$ rounds, where $c$ is a positive constant, $A$ and $B$ will meet on expectation. Thus, $A$ and $B$ will know an upper bound $cn$ of $n$.

In Stage~4, let us assume that without loss of generality, two of the robots $A$ and $B$ learn an upper bound $N$ on the value of $n$, i.e., $N = cn$ for some constant $c$. Now, either the third robot is marked or it is not. Either way, our goal in this stage is to inform this third robot that exploration is complete. It is possible that every interaction of $A$ or $B$ with the third robot is a situation where edge crossing would normally occur. In the event of one such interaction, the probability of the third robot being informed is $1/2$. After $2 \log N$ such interactions, the probability of the third robot being informed is at least $1 - 1/n$.  Recall that in this stage, robots $A$ and $B$ use a counter and will stop after $N$ rounds. If we change the counter to end at $16N \log N$ instead, then the third robot has $2N \log N$ opportunities with high probability to interact with either robot. Thus, with probability at least $1 - 1/n$ the third robot will interact with at least one of the robots and terminate, eventually resulting in explicit termination.

Construct the new algorithm {\sc Modified-Explore-Dynamic-Ring-3-Robots} using the above mentioned modifications to the stages and the use of {\sc Random-Movement}.

\begin{theorem}\label{the:randomness-alg}
When the robots run {\sc Modified-Explore-Dynamic-Ring-3-Robots}, exploration of the ring with explicit termination occurs with probability at least $1 - 1/n$ in $O(n \log n)$ rounds on expectation.
\end{theorem}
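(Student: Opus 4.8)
The plan is to establish both claims—expected running time $O(n\log n)$ and success probability at least $1-1/n$—by bounding the four stages of {\sc Modified-Explore-Dynamic-Ring-3-Robots} separately, exactly as the preceding discussion sets them up, and then combining. The central replacement for edge crossing detection is the observation that whenever two robots occupy adjacent nodes and both wish to traverse the connecting edge in opposite directions, running {\sc Random-Movement} makes exactly one of them move while the other stays with probability $1/2$, so they meet at a node in that round with probability $1/2$. Hence any event the deterministic algorithm resolved by a single edge crossing is now resolved in two rounds on expectation, and fails to be resolved within $t$ such rounds with probability $2^{-t}$.

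First I would bound the expected duration of Stages~1--3. For Stage~1, recall that each phase $i$ has been lengthened by a constant factor so that, by a Chernoff bound, a robot performs at least $2^i$ successful steps during phase $i$ once $2^i \ge n$, except with probability inverse-polynomial in $n$; since the ID length is $O(1)$, the first such phase has index $O(\log n)$ and length $\Theta(n)$, and the geometric growth of phase lengths makes the cumulative length through this phase $O(n)$. By Lemma~\ref{lem:phase} two robots move in opposite directions in this phase, and by the meeting argument of Lemma~\ref{lem:meeting} together with the coin-flip observation above, they reach adjacent nodes and then meet within $O(1)$ expected additional rounds, so Stage~1 ends in $O(n)$ expected rounds. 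Stages~2 and~3 are handled identically: each required meeting (of $A$ or $B$ with the third robot, and of $C$ with $A$ and $B$ during the zig-zag) closes a gap of $O(n)$, which by Lemma~\ref{lem:meeting} and the coin-flip observation completes in $O(n)$ expected rounds; since Stage~3 consists of a constant number of such traversals, it too is $O(n)$ in expectation.

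Next I would analyze Stage~4, which dominates. The two robots that learned an upper bound $N = cn$ move apart under a counter bounded by $16N\log N$ rounds, so Stage~4 contributes $O(n\log n)$ rounds; summing over all four stages gives the claimed $O(n\log n)$ expected running time. For the success probability, the only way explicit termination can fail is if the Stage~4 counter expires before the third robot has been informed. During these $\Theta(n\log n)$ rounds the third robot is, with high probability, presented with at least $2\log N$ distinct opportunities to interact with one of the two informed robots, each opportunity being a potential edge crossing that independently informs the third robot with probability at least $1/2$. Therefore the probability that all $2\log N$ opportunities fail is at most $2^{-2\log N} = N^{-2} \le 1/n$, so the third robot is informed, and all three robots terminate, with probability at least $1-1/n$; a union bound absorbs the inverse-polynomial Chernoff failure probabilities from the earlier stages into this same $1-1/n$ budget.

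The hard part will be making the Stage~4 opportunity count rigorous: one must verify that within the extended counter window the third robot genuinely encounters $A$ or $B$ at least $\Omega(\log n)$ times regardless of the adversary's edge deletions, and that these encounters behave as independent coin flips so that the $2^{-\Omega(\log n)}$ failure bound applies. Relatedly, care is needed to confirm that the inverse-polynomial failure events in Stage~1 (insufficient steps in a phase) and in Stages~2--3 (slow meetings) affect only the concentration of the running time and not the eventual correctness, so that they may be charged to the expectation and to the single $1/n$ union-bound budget rather than jeopardizing the Las Vegas guarantee of eventual exploration.
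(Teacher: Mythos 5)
Your proposal follows essentially the same route as the paper's own proof, which is only a sketch deferring to the preceding stage-by-stage discussion: Stages~1--3 cost $O(n)$ expected rounds because {\sc Random-Movement} turns each blocked edge crossing into a meeting within two rounds on expectation, the extended Stage~4 counter dominates at $O(n\log n)$, and the $1-1/n$ success probability comes from the $2\log N$ interaction opportunities each failing independently with probability $1/2$. Your write-up is in fact more detailed than the paper's sketch and correctly identifies the points (the $\Omega(\log n)$ opportunity count in Stage~4 against an adaptive adversary, and the independence of those coin flips) that would need to be made rigorous in a full proof.
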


\begin{proof}[Proof Sketch]
The running time is a result of the modifications to Stage~4 adding $O(n \log n)$ rounds and the use of {\sc Random-Movement} changing exact running time to expected running time. 

The change from explicit termination to explicit termination with high probability is a result of the modification to Stage~4. As we have two robots surely terminating after a certain number of rounds, we can only guarantee with probability at least $1 - 1/n$ that the third robot will terminate as well.
\end{proof}

\begin{remark}\label{rem:partial-termination}
If we relax our termination condition to partial termination, we can eliminate Stage~4 of {\sc Modified-Explore-Dynamic-Ring-3-Robots} entirely and simply have robots terminate instead of moving to Stage~4. Then the exploration of the ring with partial termination occurs with high probability in $O(n)$ rounds on expectation.
\end{remark}



\subsection{Assigning Unique IDs}\label{subsec:unique-id-randomness}
Throughout this paper, we made the assumption that each robot was initially assigned a unique ID from the range $[1,2^k]$ prior to the start of the algorithm, where $k$ is the length of the ID bits. This assumption can be removed by having each robot pick an ID uniformly at random from a range of numbers $[1,2^l]$, where $l$ is a parameter to the algorithm\footnote{Since the number of robots $3$ is given to the robots, they can use it to set a value for $l$, e.g., $l=2^3+12 =20$.}. It is easy to see that the probability that all robots have unique IDs is $(1-1/2^l)(1-2/2^l) = 1 - O(1/2^l)$. It should be noted that although $l$ can be made arbitrarily large to improve the probability that each robot has a different ID, a larger value of $l$ possibly results in a longer runtime of the algorithm (refer to Lemma~\ref{lem:lruntime}).

\begin{theorem}\label{the:unique-id-removal}
When robots run {\sc Modified-Explore-Dynamic-Ring-3-Robots} and choose IDs uniformly at random from the range $[1,2^l]$, exploration of the ring with explicit termination occurs with probability at least $(1 - 1/n)(1 - O(1/2^l))$ in $O((n + n\cdot 2^l) \log n)$ rounds on expectation. 
\end{theorem}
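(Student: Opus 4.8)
The plan is to treat Theorem~\ref{the:unique-id-removal} as a composition of two probabilistic guarantees layered on top of Theorem~\ref{the:randomness-alg}: the event that the randomly drawn IDs are pairwise distinct, and the event that the already-analyzed randomized algorithm succeeds given distinct IDs. First I would establish the probability that the three IDs drawn independently and uniformly from $[1,2^l]$ are pairwise distinct. Fixing the first robot's ID arbitrarily, the second differs from it with probability $1-1/2^l$, and the third differs from both with probability $1-2/2^l$; multiplying gives $(1-1/2^l)(1-2/2^l)=1-O(1/2^l)$, exactly as noted in the surrounding text.

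Next I would argue that, conditioned on any fixed assignment of pairwise-distinct IDs, the random draw furnishes precisely the input that {\sc Modified-Explore-Dynamic-Ring-3-Robots} assumes, namely three unique IDs, now of bit-length $l$ rather than $O(1)$. The only place the algorithm exploits distinctness of IDs is the symmetry-breaking of Stage~1 (Lemma~\ref{lem:phase}), so once distinctness holds, Theorem~\ref{the:randomness-alg} applies. The crucial observation is that Theorem~\ref{the:randomness-alg} already guarantees success probability at least $1-1/n$ for \emph{every} fixed choice of distinct IDs, its randomness being the {\sc Random-Movement} coins, which are independent of the ID draw; averaging this guarantee over all distinct-ID assignments preserves the $1-1/n$ bound conditioned on the distinctness event. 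Combining, I would lower bound
\[
\Pr[\text{success}] \;\ge\; \Pr[\text{distinct IDs}]\cdot\Pr[\text{success}\mid\text{distinct IDs}] \;\ge\; (1-O(1/2^l))(1-1/n),
\]
which is the claimed probability.

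For the running time I would re-run the stage-by-stage accounting of Theorem~\ref{the:randomness-alg} with the ID length set to $l$. The only stage whose bound depends on the ID length is Stage~1, which by Lemma~\ref{lem:lruntime} with $k=l$ costs $O(n+n\cdot 2^l)$ rounds even after the constant-factor padding introduced for {\sc Random-Movement}; Stages~2 and~3 remain $O(n)$ in expectation and the modified Stage~4 contributes $O(n\log n)$. Summing and using $(n+n\cdot 2^l)+n\log n = O((n+n\cdot 2^l)\log n)$ yields the stated expected bound, which is a deliberately loose envelope subsuming both the $n\cdot 2^l$ term and the $n\log n$ term.

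The main obstacle I anticipate is not any single calculation but making the conditioning rigorous: I must verify that the two sources of randomness (ID selection and movement coins) are genuinely independent so that the product bound is legitimate, and I must confirm that the expected running time stays finite (so the ``on expectation'' claim is well-posed) even on the low-probability event of an ID collision. For the latter I would observe that a collision of only two IDs still leaves Stage~1's symmetry-breaking intact through the third, differing robot, while a total collision of all three IDs occurs with probability $O(1/2^{2l})$ and is broken by the coin-flip randomness of {\sc Random-Movement}; either way the algorithm terminates, so the expectation is finite and the bad event is absorbed into the $O(1/2^l)$ error term.
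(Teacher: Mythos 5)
Your proposal is correct and follows essentially the same route as the paper, which states this theorem without a formal proof but whose surrounding text gives exactly your argument: the IDs are pairwise distinct with probability $(1-1/2^l)(1-2/2^l)=1-O(1/2^l)$, conditioning on distinctness reduces to Theorem~\ref{the:randomness-alg}, and the runtime follows from Lemma~\ref{lem:lruntime} with ID length $l$. Your added care about the independence of the ID draw from the {\sc Random-Movement} coins and the finiteness of the expectation on the collision event is a welcome tightening of what the paper leaves implicit.
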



\section{Exploration in Semi-Synchronous Setting (SSYNC)}\label{sec:ssync}
In this section, we show how to extend our ideas to the passive transport semi-synchronous model proposed in~\cite{DDFS16} in order to achieve exploration with partial termination using 3 robots even in the absence of a landmark or the knowledge of $n$. Recall that in the semi-synchronous setting, in every round, a subset of the robots are put to sleep by the adversary with the restriction that the number of rounds any robot remains asleep is finite. In this setting, passive transport relates to how the robot moves given the following setup. Suppose a robot is awake and wants to travel along an edge $e$ in a round $i$ and the adversary removed that edge in that round. Now, suppose the adversary subsequently puts the robot to sleep from round $i+1$ until some round $j$. If $e$ is present again for the first time in some round $k:\, i+1 \leq k \leq j$, then the robot moves along the edge in round $k$ even though it is asleep. 

In this setting, Di Luna et al.~\cite{DDFS16} showed that with 3 robots and either the knowledge of an upper bound on $n$ or the presence of a landmark, they were able to achieve exploration with partial termination. We show that it is possible for 3 robots to achieve exploration with partial termination without either of the above two requirements, so long as the robots have access to randomness and the ability of edge crossing detection. We assume that robots have unique IDs, but that requirement can be removed through the use of randomness as described in Section~\ref{subsec:unique-id-randomness}.

We consider the four stage algorithm presented in Section~\ref{explore} and show how to modify it to achieve exploration with partial termination in this passive transport semi-synchronous model. We first replace Stage~1 with the following zero round protocol. Each robot flips a coin and chooses which direction to move (until Stage~2 is reached) based on the result of the coin toss. With probability $3/4$, two robots will move in one direction while the third moves in the other direction. Stage~2 proceeds as described in the original algorithm. Stage~3 is modified as follows. Robot $C$ will check for an additional condition before determining that the ring has been explored. If $AtoB = BtoA$, and $A$ and $B$ were both trying to move on an edge removed by the adversary, then $C$ determines that the ring has been explored. We explain below how $C$ can detect that $A$ and $B$ were trying to move on an edge. Note that it is not necessary that $A$ and $B$ were awake when $C$ visited, but merely that they were attempting to move.

The reasoning behind the above changes is that Stage~1 and Stage~4 require robots to rely on counting the number of rounds. While the simple trick of flipping coins to choose directions solves the Stage~1 problem, there is no immediate fix to the problems present in Stage~4. For Stage~3, we require the above change in order to protect against the adversary simply putting $A$ and $B$ to sleep while $C$ moves back and forth between them. The condition ensures that $C$ has to see them both wanting to move (but not necessarily awake) and prevented to by the adversary before deciding to terminate. Since the adversary can only keep a robot asleep for a finite number of rounds and only remove at most one edge from the graph, eventually, one of the robots will make progress on the ring until either the condition is met or $A$ and $B$ meet.

There is the following subtlety to take into account. Suppose that two robots cross the same edge in the same round (or end up co-located at the same node) and at least one of them is asleep. We need both of them to detect that such an edge crossing (or meeting) occurred and furthermore, be able to swap data with one another. This data should include information about whether one of the robots tried to move along an edge while awake but was subsequently put to sleep before the move could be completed.

With the above modifications, we get an algorithm with the following properties.

\begin{theorem}
There exists an algorithm that correctly explores a $1$-interval connected dynamic (anonymous) ring of size $n$ with probability $3/4$ in $O(fn)$ steps, where $f$ is the largest interval of time between two consecutive activations of any robot, using $3$ robots with unique IDs that neither have common chirality, nor knowledge of an upper bound on $n$, nor access to a landmark.
\end{theorem}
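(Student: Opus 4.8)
The plan is to isolate the single source of randomness that governs correctness and treat everything else as a deterministic consequence of the edge crossing detection capability. The only coin flips affecting whether exploration ever completes are those in the new zero-round Stage~1: each of the three robots independently selects a global direction, and since each coin is fair and independent of the robot's (possibly distinct) chirality, each robot's induced global direction is uniform and independent even from the adversary's viewpoint. Hence all three move in a common global direction with probability $2\cdot(1/2)^3 = 1/4$, and with the complementary probability $3/4$ at least two robots move in opposite directions. First I would condition on this event of probability $3/4$; the remainder of the argument then shows that, given it, exploration completes and at least one robot detects it.

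Next I would establish a semi-synchronous analogue of Lemma~\ref{lem:meeting}. Under passive transport a robot that is asleep is still carried across an edge the first time that edge reappears, so the only obstruction to progress is the adversary simultaneously keeping a robot asleep and deleting the one edge it would cross. Since any robot is activated at least once every $f$ rounds and the adversary may delete only one edge per round, in every window of $O(f)$ rounds at least one of the two oppositely-moving robots makes genuine progress (or the two meet). Feeding this into the same induction on ring size as in Lemma~\ref{lem:meeting} yields that two robots meet within $O(fn)$ steps, which drives Stage~2 (the analogue of Lemma~\ref{lem:stage2}) and the zig-zag traversals of Stage~3 (the analogue of Lemma~\ref{lem:stage3}), each also costing $O(fn)$ steps.

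For correctness of detection I would argue the modified Stage~3 predicate. As before, $AtoB = BtoA$ after two consecutive zig-zag legs can only occur when $A$ and $B$ sit on the two endpoints of a single deleted edge, since the adversary removes at most one edge per round; in that case $C$ has visited every node and the ring size equals $AtoB + 1$. The new ingredient is the extra check that $A$ and $B$ are both \emph{trying} to cross that edge, which is needed because in SSYNC the adversary could otherwise freeze $A$ and $B$ asleep while $C$ shuttles between them and spuriously balance the counts. Here I would invoke the data-swap subtlety noted before the theorem: whenever $C$ becomes co-located with, or edge-crosses, $A$ or $B$, even an asleep robot detects the encounter and reports whether it was awake and blocked on an edge. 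Because a robot can stay asleep only finitely long and only one edge can be missing, eventually both $A$ and $B$ are simultaneously awake-and-blocked on the same edge (or they meet), so $C$'s predicate fires after $O(fn)$ steps; the detecting robot then terminates, which is exactly partial termination since Stage~4 is dropped.

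Summing the four $O(fn)$ contributions gives the claimed $O(fn)$ running time, and multiplying the deterministic guarantee by the Stage~1 success probability gives correctness with probability $3/4$. I expect the main obstacle to be the detection argument of Stage~3 rather than the meeting bound: one must rule out every way the adversary can force $AtoB = BtoA$ while $A$ and $B$ are \emph{not} pinned to a common deleted edge, and must show that the ``trying to move'' flag, communicated through the asleep-robot data swap, is both well-defined and impossible to forge by adversarial sleep scheduling. Making that flag and its exchange precise in the passive-transport model is the crux; the probabilistic and timing parts are then routine given Lemma~\ref{lem:meeting} and its SSYNC adaptation.
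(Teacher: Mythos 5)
Your proposal is correct and follows essentially the same route as the paper, which in fact offers no formal proof of this theorem beyond the preceding discussion: the same $3/4$ probability from the zero-round coin-flip replacement of Stage~1, the same reliance on the modified Stage~3 predicate (equal zig-zag counts plus both $A$ and $B$ reporting a blocked move, exchanged even with a sleeping robot), and the same $O(fn)$ accounting driven by the finite-sleep and single-edge-deletion constraints. Your closing caveat about making the ``trying to move'' flag precise is a fair observation, but the paper leaves that point at the same informal level.
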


Note that we measure number of steps moved and not running time. Furthermore, note that the number of steps is $O(fn)$ and not $O(n)$. This is due to the fact that the adversary can put $A$ and $B$ to sleep for an arbitrarily long time in Stage~3, but not an infinitely long time.

\section{Conclusions}
\label{sec:conclusions}

In this paper, we looked into the problem of exploration of a dynamic ring in the presence of 1-interval connectivity. We first showed that exploration with explicit termination subject to some constraints with just two robots equipped with unique IDs even with access to edge crossing detection and randomness is impossible. Subsequently, we presented a deterministic algorithm where three uniquely identifiable robots with edge crossing detection capability explore any 1-interval connected dynamic ring in optimal time. We also showed how to remove the requirement of this capability and allow the robots to be anonymous while still achieving explicit termination with high success probability through the use of randomness. We finally extended our results to the semi-synchronous setting.

There is an interesting line of future research. Our algorithms intimately used advance knowledge of the number of robots present in the system. If that knowledge is unknown and $\geq 3$ robots are present, is there an algorithm to solve exploration with explicit termination?



\bibliographystyle{IEEEtranS}
\bibliography{references}

\begin{thebibliography}{10}
\providecommand{\url}[1]{#1}
\csname url@samestyle\endcsname
\providecommand{\newblock}{\relax}
\providecommand{\bibinfo}[2]{#2}
\providecommand{\BIBentrySTDinterwordspacing}{\spaceskip=0pt\relax}
\providecommand{\BIBentryALTinterwordstretchfactor}{4}
\providecommand{\BIBentryALTinterwordspacing}{\spaceskip=\fontdimen2\font plus
\BIBentryALTinterwordstretchfactor\fontdimen3\font minus
  \fontdimen4\font\relax}
\providecommand{\BIBforeignlanguage}[2]{{%
\expandafter\ifx\csname l@#1\endcsname\relax
\typeout{** WARNING: IEEEtranS.bst: No hyphenation pattern has been}%
\typeout{** loaded for the language `#1'. Using the pattern for}%
\typeout{** the default language instead.}%
\else
\language=\csname l@#1\endcsname
\fi
#2}}
\providecommand{\BIBdecl}{\relax}
\BIBdecl

\bibitem{AAMSS18}
A.~Agarwalla, J.~Augustine, W.~K. Moses~Jr., M.~Sankar~K., and A.~K. Sridhar,
  ``Deterministic dispersion of mobile robots in dynamic rings,'' in
  \emph{International Conference on Distributed Computing and Networking,
  {ICDCN}}.\hskip 1em plus 0.5em minus 0.4em\relax ACM, 2018, p.~19.

\bibitem{AH00}
S.~Albers and M.~R. Henzinger, ``Exploring unknown environments,'' \emph{{SIAM}
  Journal on Computing}, vol.~29, no.~4, pp. 1164--1188, 2000.

\bibitem{AKL08}
C.~Avin, M.~Kouck{\`y}, and Z.~Lotker, ``How to explore a fast-changing world
  (cover time of a simple random walk on evolving graphs),'' in
  \emph{International Colloquium on Automata, Languages, and Programming,
  {ICALP}}.\hskip 1em plus 0.5em minus 0.4em\relax Springer, 2008, pp.
  121--132.

\bibitem{D19}
S.~Das, ``Graph explorations with mobile agents,'' in \emph{Distributed
  Computing by Mobile Entities}.\hskip 1em plus 0.5em minus 0.4em\relax
  Springer, 2019, pp. 403--422.

\bibitem{DP99}
X.~Deng and C.~H. Papadimitriou, ``Exploring an unknown graph,'' \emph{Journal
  of Graph Theory}, vol.~32, no.~3, pp. 265--297, 1999.

\bibitem{Luna19}
G.~A. {Di Luna}, ``Mobile agents on dynamic graphs,'' in \emph{Distributed
  Computing by Mobile Entities, Current Research in Moving and Computing},
  2019, pp. 549--584.

\bibitem{DDFS16}
G.~A. {Di Luna}, S.~Dobrev, P.~Flocchini, and N.~Santoro, ``Live exploration of
  dynamic rings,'' in \emph{International Conference on Distributed Computing
  Systems, {ICDCS}}.\hskip 1em plus 0.5em minus 0.4em\relax IEEE, 2016, pp.
  570--579.

\bibitem{DFPPSV18}
G.~A. {Di Luna}, P.~Flocchini, L.~Pagli, G.~Prencipe, N.~Santoro, and
  G.~Viglietta, ``Gathering in dynamic rings,'' \emph{Theoretical Computer
  Science}, 2018.

\bibitem{DP14}
Y.~Dieudonn{\'{e}} and A.~Pelc, ``Deterministic network exploration by
  anonymous silent agents with local traffic reports,'' \emph{{ACM}
  Transactions on Algorithms}, vol.~11, no.~2, pp. 10:1--10:29, 2014.

\bibitem{EHK15}
T.~Erlebach, M.~Hoffmann, and F.~Kammer, ``On temporal graph exploration,'' in
  \emph{International Colloquium on Automata, Languages, and Programming,
  {ICALP}}.\hskip 1em plus 0.5em minus 0.4em\relax Springer, 2015, pp.
  444--455.

\bibitem{FMS13}
P.~Flocchini, B.~Mans, and N.~Santoro, ``On the exploration of time-varying
  networks,'' \emph{Theoretical Computer Science}, vol. 469, pp. 53--68, 2013.

\bibitem{FIPPP05}
P.~Fraigniaud, D.~Ilcinkas, G.~Peer, A.~Pelc, and D.~Peleg, ``Graph exploration
  by a finite automaton,'' \emph{Theoretical Computer Science}, vol. 345, no.
  2-3, pp. 331--344, 2005.

\bibitem{GFMS19}
T.~Gotoh, P.~Flocchini, T.~Masuzawa, and N.~Santoro, ``Tight bounds on
  distributed exploration of temporal graphs,'' in \emph{International
  Conference on Principles of Distributed Systems, {OPODIS}}, 2019.

\bibitem{IKW14}
D.~Ilcinkas, R.~Klasing, and A.~M. Wade, ``Exploration of constantly connected
  dynamic graphs based on cactuses,'' in \emph{International Colloquium on
  Structural Information and Communication Complexity, {SIROCCO}}.\hskip 1em
  plus 0.5em minus 0.4em\relax Springer, 2014, pp. 250--262.

\bibitem{IW11}
D.~Ilcinkas and A.~M. Wade, ``On the power of waiting when exploring public
  transportation systems,'' in \emph{International Conference on Principles of
  Distributed Systems, {OPODIS}}.\hskip 1em plus 0.5em minus 0.4em\relax
  Springer, 2011, pp. 451--464.

\bibitem{IW13}
------, ``Exploration of the t-interval-connected dynamic graphs: The case of
  the ring,'' in \emph{International Colloquium on Structural Information and
  Communication Complexity, {SIROCCO}}.\hskip 1em plus 0.5em minus 0.4em\relax
  Springer, 2013, pp. 13--23.

\bibitem{KLO10}
F.~Kuhn, N.~Lynch, and R.~Oshman, ``Distributed computation in dynamic
  networks,'' in \emph{ACM symposium on Theory of computing, {STOC}}.\hskip 1em
  plus 0.5em minus 0.4em\relax ACM, 2010, pp. 513--522.

\bibitem{KO11}
F.~Kuhn and R.~Oshman, ``Dynamic networks: models and algorithms,''
  \emph{{SIGACT} News}, vol.~42, no.~1, pp. 82--96, 2011.

\bibitem{MS16}
O.~Michail and P.~G. Spirakis, ``Traveling salesman problems in temporal
  graphs,'' \emph{Theoretical Computer Science}, vol. 634, pp. 1--23, 2016.

\bibitem{OW05}
R.~O'Dell and R.~Wattenhofer, ``Information dissemination in highly dynamic
  graphs,'' in \emph{Joint Workshop on Foundations of Mobile Computing,
  {DIALM-POMC}}.\hskip 1em plus 0.5em minus 0.4em\relax ACM, 2005, pp.
  104--110.

\bibitem{PP99}
P.~Panaite and A.~Pelc, ``Exploring unknown undirected graphs,'' \emph{Journal
  of Algorithms}, vol.~33, no.~2, pp. 281--295, 1999.

\end{thebibliography}


\end{document}